\documentclass{article}
 
\usepackage{amsmath, amssymb, amsthm}
\usepackage{geometry}
\usepackage{fancyhdr}
\usepackage{graphicx}
\usepackage{booktabs}
\usepackage{multirow}
\usepackage{float}
\usepackage{xcolor}
\usepackage[ruled,vlined,linesnumbered]{algorithm2e}
\usepackage{hyperref}
\usepackage{natbib}
\usepackage{sectsty}   
 
\allsectionsfont{\scshape}
 
\fancypagestyle{plain}{%
  \fancyhf{}
  
  \fancyhead[L]{\small Accepted at the SeQureDB Workshop at ACM SIGMOD 2026 and TPDP Workshop 2026}
  \fancyfoot[C]{\thepage}
}
 
\newtheorem{theorem}{Theorem}[section]
\newtheorem{lemma}[theorem]{Lemma}
\newtheorem{proposition}[theorem]{Proposition}
\newtheorem{corollary}[theorem]{Corollary}

\theoremstyle{definition}
\newtheorem{definition}[theorem]{Definition}
\theoremstyle{remark}
\newtheorem{remark}[theorem]{Remark}
 
\begin{document}
 
{\LARGE\bfseries\scshape\raggedright
Refined Differentially Private Linear Regression via\\[4pt]
Extension of a Free Lunch Result\par}
 
\vspace{1.4em}
 
{\setlength{\parindent}{0pt}
\textbf{Sasmita Harini S}\\
Indian Institute of Science\\
Bangalore, India\\
\texttt{sasmitah@iisc.ac.in}
 
\vspace{0.8em}
\textbf{Anshoo Tandon}\\
CDPG, Indian Institute of Science\\
Bangalore, India\\
\texttt{anshoo.tandon@gmail.com}
}
 
\vspace{1.5em}
\thispagestyle{plain}
 
\begin{center}
{\sc Abstract}
\end{center}
\begin{quote}
As data-privacy regulations tighten and statistical models are increasingly deployed on sensitive human-sourced data, privacy-preserving linear regression has become a critical necessity. For the add-remove DP model, \citet{kulesza2024} and \citet{fitzsimons2024} have independently shown that the size of the dataset --- an important statistic for linear regression --- can be privately estimated for `free', via a simplex transformation of bounded variables and private sum queries on the transformed variables. In this work, we extend this free lunch result via carefully crafted multidimensional simplex transformations to variables and functions that are bounded in the interval $[0,1]$. We show that these transformations can be applied to refine the estimates of sufficient statistics needed for private simple linear regression based on ordinary least squares. We provide both analytical and numerical results to demonstrate the superiority of our approach. Our proposed transformations have general applicability and can be readily adapted for differentially private \emph{polynomial} regression.
\end{quote}
 
\vspace{0.5em}
\noindent\textbf{Keywords.}\quad differential privacy, linear regression, simplex transformation, variance reduction, sufficient statistics
 
\vspace{0.5em}
\section{Introduction}

Differential privacy \citep{dwork2006calibrating} has become the gold standard for releasing statistical insights from sensitive datasets while providing rigorous privacy guarantees \citep{NIST2025}. In linear regression, the challenge is to privately estimate model parameters from data $\mathcal{D} = \{(x_i, y_i)\}_{i=1}^n$ while maintaining utility. Traditional approaches \citep{alabi2022differentially} privatize sufficient statistics independently, adding noise scaled to individual sensitivities. Various techniques have been developed to improve upon basic Laplace mechanism, including concentrated differential privacy \citep{bun2016cdp}, smooth sensitivity methods \citep{bun2019average}, and robust estimation approaches \citep{kamath2020heavy,liu2021robust}. While effective, these methods suffer from high variance accumulation \citep{cai2021cost,kamath2023trilemma}, particularly in stringent privacy regimes or with small sample sizes.

Recent work by \citet{kulesza2024} and \citet{fitzsimons2024} revealed a key insight: for bounded data, the dataset size $n$ can be estimated ``for free'' via simplex transformations without additional privacy cost. In this paper, we extend this free lunch result to construct \emph{multiple independent unbiased estimators} for each sufficient statistic in simple linear regression based on ordinary least squares. By applying inverse-variance weighting, we achieve substantial variance reduction---up to $4.8\times$ for critical quadratic statistics while maintaining identical privacy guarantees.

Our contributions are: (1) a novel variance reduction technique (DP-RSS), that exploits algebraic structure in bounded data via complementary simplex transformations, to refine the estimates of private sufficient statistics required for ordinary least squares based linear regression, (2) theoretical analysis and empirical validation demonstrating enhanced performance over baseline methods, and (3) a natural extension to polynomial regression.

\section{Preliminaries}

\subsection{Notations}

The dataset is denoted by $D = \{(x_i, y_i)\}_{i=1}^n$, where $n$ denotes the total number of records (sample size).
We write $D \sim D'$ to denote neighboring datasets under the add/remove (unbounded) differential privacy model.
The regression parameters are denoted by $\alpha$ (slope) and $\beta$ (intercept), and $\varepsilon > 0$ represents the total privacy budget.
We use $\mathrm{Lap}(b)$ to denote the zero-mean Laplace distribution with scale parameter $b$, having probability density function $\frac{1}{2b}\exp\!\left(-\frac{|x|}{b}\right)$. For a vector $x \in \mathbb{R}^n$, the $\ell_p$-norm is defined as $\|x\|_p = \left( \sum_{i=1}^{n} |x_i|^p \right)^{1/p}$, $p \in [1, \infty]$. The $\ell_1$-sensitivity of a function $f$ is denoted by $\Delta := \max_{D \sim D'} \|f(D) - f(D')\|_1$.
We use $S_{g(x,y)}$ to denote the true sum statistic (e.g., $S_{x^2} = \sum_i x_i^2$), $\tilde{S}_{g(x,y)}$ to denote its noisy version obtained by adding appropriate Laplace noise, and $\hat{S}_{g(x,y)}$ to denote the refined estimator obtained via weighted averaging.
Finally, $w_i$ and $w'_i$ denote the optimal weights used for variance reduction.

\subsection{Differential Privacy}

\begin{definition}[Differential Privacy]
A randomized mechanism $\mathcal{M}$ that maps the output of a query function $F(D)$ on a dataset $D$ to a randomized output satisfies $(\varepsilon, \delta)$-differential privacy if for all pairs of neighboring datasets $D$ and $D'$ differing in at most one record (denoted $D \sim D'$), and for all measurable subsets $S$ of the output space:
\[
\Pr[\mathcal{M}(D) \in S] \leq e^{\varepsilon} \Pr[\mathcal{M}(D') \in S] + \delta.
\]
Here, $S$ denotes any event in the output space of $\mathcal{M}$.
\end{definition}

The privacy parameters $(\varepsilon, \delta)$ characterize the strength of the guarantee.
The parameter $\varepsilon > 0$ is called the \emph{privacy loss budget} and controls the multiplicative distance between output distributions under neighboring datasets.
A smaller $\varepsilon$ provides stronger privacy.
The parameter $\delta \geq 0$ allows a small probability of failure of the strict privacy guarantee and captures the probability with which the bound may be violated. When $\delta = 0$, the guarantee is referred to as \emph{pure differential privacy}.
In this work, we focus on $(\varepsilon, 0)$-differential privacy (pure DP).

\paragraph{Neighboring Datasets.}
We adopt the \emph{add-remove} model of adjacency, where two datasets $D$ and $D'$ are considered neighboring ($D \sim D'$) if one can be obtained from the other by adding or removing a single record. To achieve differential privacy, randomized mechanisms typically add carefully calibrated noise to the true query output.
The scale of the noise depends on the \emph{sensitivity} of the query function, which measures how much the query output can change between neighboring datasets.

For a function $F : \mathcal{X} \rightarrow \mathbb{R}^n$, the $\ell_p$-sensitivity is defined as:
\[
\Delta_p(F) := \sup_{D \sim D'} \|F(D) - F(D')\|_p.
\]

We use the $\ell_1$-sensitivity under the add-remove adjacency model. This choice aligns with mechanisms such as the Laplace mechanism, where noise is calibrated proportional to $\Delta_1(F)/\varepsilon$ to guarantee $(\varepsilon,0)$-differential privacy. We operate in the \emph{central} model of differential privacy \citep{NIST2025}, where a trusted curator holds the full dataset and releases a privatized output.

\subsection{Problem Statement}

Consider a simple linear regression model: $y_i = \alpha x_i + \beta + e_i$ where $e_i$ represents i.i.d.\ Gaussian noise with unknown variance. Similar to \citet{alabi2022differentially,chetty2018}, we assume that the variables have been scaled and shifted to satisfy $0 \leq x_i, y_i \leq 1$, for $1 \leq i \leq n$ (see Appendix~\ref{app:general_bounds} for the adaptation of DP-RSS to the general case where $x_{\min} \leq x_i \leq x_{\max}$ and $y_{\min} \leq y_i \leq y_{\max}$). Given a dataset $D = \{(x_i, y_i)\}_{i=1}^n$ and a privacy budget $\varepsilon > 0$, our goal is to estimate $\alpha$ and $\beta$ while satisfying $\varepsilon$-DP under the add/remove DP model. Let $\hat{\alpha}$ and $\hat{\beta}$ denote the private estimates of $\alpha$ and $\beta$, respectively. We measure the accuracy of the estimated linear model via evaluation of the mean absolute error (MAE) and the mean squared error (MSE), defined as
\begin{align}
L_1 \text{ Error: MAE} &= \mathbb{E} \left[ \left|(\hat{\alpha} x + \hat{\beta}) - (\alpha x + \beta) \right| \right] \label{eq:mae} \\
L_2 \text{ Error: MSE} &= \mathbb{E} \left[ \left( (\hat{\alpha} x + \hat{\beta}) - (\alpha x + \beta) \right)^2 \right] \label{eq:mse}
\end{align}

\section{Background and Related Work}
\label{sec:background}

\subsection{The Free Lunch via Simplex Transformations}

For univariate bounded data $x \in [0,1]$, the simplex transformation $x \mapsto (x, 1-x)$ enables a remarkable result \citep{kulesza2024,fitzsimons2024}: privatizing the sum pair $(S_x, S_{1-x})$ with Laplace noise $\mathrm{Lap}(1/\varepsilon)$ yields both a noisy sum $\tilde{S}_x$ and a ``free'' size estimate $\tilde{n} = (\tilde{S}_x + \tilde{S}_{1-x})$. The variance of $\tilde{S}_x$ remains identical to direct privatization, so $\tilde{n}$ incurs zero additional privacy cost.

This generalizes to any bounded function $f(x) \in [0,1]$: the sum pair $(S_f, S_{1-f})$ has joint $\ell_1$-sensitivity $1$, enabling private size estimation. The key is that the complementary components sum to a constant $n$, creating an algebraic constraint that can be exploited via post-processing.

\subsection{Prior Work on Private Linear Regression}

Traditional DP linear regression methods privatize sufficient statistics independently. For simple linear regression requiring $\{S_x, S_y, S_{x^2}, S_{xy}\}$, standard approaches allocate privacy budget equally and add independent noise to each statistic \citep{alabi2022differentially}. Under the \emph{swap} model (replacing one record with another), \citet{alabi2022differentially} achieved strong results by exploiting the known dataset size. In another related work, \citet{amin2023easyDP} suggested to partition the dataset into roughly $m=1000$ subsets, non-privately estimate a regression model on each, and then apply the exponential mechanism to privately estimate a model; however, such an approach may not be feasible for small to moderate sized datasets. A related line of work is explored in \citet{ferrando2025private}, who proposed private regression via data-dependent sufficient statistic perturbation; their approach, however, is restricted only to discrete data whereas our method operates directly on continuous data in $[0,1]$.

Another prominent approach is Differentially Private Stochastic Gradient Descent (DP-SGD) \citep{bassily2014private,abadi2016deep,cai2021cost}. These approaches typically require careful hyperparameter tuning, and some open-source implementations have been found to have overestimated their privacy guarantees \citep{annamalai2024shuffle,chua2024howprivate}.

The \emph{add/remove} model where neighboring datasets differ by addition or removal of a record imposes stricter requirements. Dataset size becomes unknown and must itself be estimated privately. This model is more challenging but provides stronger privacy guarantees \citep{NIST2025}. For comparison with prior art, we adapt the two baseline methods to the add/remove setting: DP-SS (Differentially Private Sufficient Statistics) \citep{alabi2022differentially} and DP-Theil-Sen mechanisms \citep{dwork2009RobustStatistics,alabi2022differentially}.

In this paper, we extend the free lunch insight for add/remove model beyond simple size estimation. We show that carefully designed multidimensional simplex transformations enable constructing \emph{multiple independent estimators} for the same statistic, which can be optimally combined to dramatically reduce variance without consuming additional privacy budget.

\subsection{Our Adaptations of Prior Work for Add/Remove DP}

The add/remove DP model poses stricter requirements as the dataset size is private. We adapt the DP-SS and DP-Theil-Sen algorithms as follows:

\subsubsection{DP-SS: Independent Sufficient Statistics}

Our DP-SS baseline represents the adapted version, based on \citet{kulesza2024,fitzsimons2024}, for add/remove DP. Key modifications, detailed in Algorithm~\ref{alg:dp_ss}, include:
\begin{itemize}
  \item \textbf{Private size estimation}: Since the dataset size $n$ is unknown, we apply simplex transformations to obtain complementary statistics $(S_x, S_{1-x})$, $(S_y, S_{1-y})$, $(S_{x^2}, S_{1-x^2})$, and $(S_{xy}, S_{1-xy})$. Each pair has $\ell_1$-sensitivity 1 in the add/remove DP model. Private estimation of the dataset size, $\tilde{n}$, is obtained via step~4 in Algorithm~\ref{alg:dp_ss}.
  \item \textbf{Equal budget split}: Privacy budget $\varepsilon$ is split equally as $\varepsilon' = \varepsilon/4$ across the four statistic pairs.
\end{itemize}

\begin{remark}[DP-SS Baseline Variance]
In DP-SS, the privacy budget $\varepsilon$ is split equally into $\varepsilon/4$ for each of $S_x, S_y, S_{x^2}, S_{xy}$. Each has sensitivity 1, so noise scale is $4/\varepsilon$, yielding $\mathrm{Var}(\mathrm{Lap}(4/\varepsilon)) = 2 \cdot (4/\varepsilon)^2 = 32/\varepsilon^2$. For the private sample size estimate, we have $\mathrm{Var}(\tilde{n}) = 16/\varepsilon^2$.
\end{remark}

\begin{algorithm}[H]
\caption{DP-SS: Differentially Private Sufficient Statistics for add/remove DP}
\label{alg:dp_ss}
\SetKwInOut{Dataset}{Dataset}
\SetKwInOut{Params}{Privacy params}
\Dataset{$\{(x_i, y_i)\}_{i=1}^n \in ([0, 1] \times [0, 1])^n$}
\Params{$\varepsilon > 0$}
\BlankLine
$\varepsilon' = \varepsilon / 4$, $\Delta = 1$\;
\tcp{Sample Laplace Noise and Add to Statistics}
$L_{S_k} \sim \mathrm{Lap}(0, \Delta/\varepsilon')$, where $1 \le k \le 8$\;
$\tilde{S}_{x} = \sum x_i + L_{S_1}$\;
$\tilde{S}_{1-x} = \sum (1-x_i) + L_{S_2}$\;
$\tilde{S}_{y} = \sum y_i + L_{S_3}$\;
$\tilde{S}_{1-y} = \sum (1-y_i) + L_{S_4}$\;
$\tilde{S}_{xy} = \sum x_i y_i + L_{S_5}$\;
$\tilde{S}_{1-xy} = \sum (1-x_i y_i) + L_{S_6}$\;
$\tilde{S}_{x^2} = \sum x_i^2 + L_{S_7}$\;
$\tilde{S}_{1-x^2} = \sum (1-x_i^2) + L_{S_8}$\;
$\tilde{n} = (\tilde{S}_{x} + \tilde{S}_{1-x} + \tilde{S}_{y} + \tilde{S}_{1-y} + \tilde{S}_{xy} + \tilde{S}_{1-xy} + \tilde{S}_{x^2} + \tilde{S}_{1-x^2}) / 4$\;
\If{$\tilde{n} \le 0$}{\Return $(\hat{\alpha}, \hat{\beta}) = (0, 0.5)$\;}
\BlankLine
$\widetilde{ncov} = \tilde{S}_{xy} - (\tilde{S}_{x} \cdot \tilde{S}_{y}) / \tilde{n}$\;
$\widetilde{nvar} = \tilde{S}_{x^2} - (\tilde{S}_{x}^2) / \tilde{n}$\;
\If{$\widetilde{nvar} \le 0$}{\Return $(\hat{\alpha}, \hat{\beta}) = (0, 0.5)$\;}
\BlankLine
$\hat{\alpha} = \widetilde{ncov} / \widetilde{nvar}$\;
$\hat{\beta} = (\tilde{S}_{y} - \hat{\alpha} \cdot \tilde{S}_{x}) / \tilde{n}$\;
\BlankLine
\Return{$\hat{\alpha}, \hat{\beta}$}
\end{algorithm}

\subsubsection{DP-Theil-Sen: Robust Median-Based Estimation}

We adapt the DP-Theil-Sen matching-based algorithm in \citet{alabi2022differentially} to the add/remove DP model. The adapted version is detailed in Algorithm~\ref{alg:dp_theil_sen}, and includes a private estimation of the dataset size (see step~3), that requires privacy budget $\varepsilon' = \varepsilon/3$.

\begin{algorithm}[H]
\caption{DP-Theil-Sen with Random Matchings for add/remove DP}
\label{alg:dp_theil_sen}
\SetKwInOut{Data}{Data}
\SetKwInOut{Params}{Privacy params}
\SetKwInOut{Hyper}{Hyperparams}
\Data{$\{(x_i, y_i)\}_{i=1}^n \in ([0, 1] \times [0, 1])^n$}
\Params{$\varepsilon > 0$}
\Hyper{$k$ iterations, Clipping bounds $[r_L, r_U]$}
\BlankLine
$\varepsilon' = \varepsilon / 3$\;
Sample $L_n \sim \mathrm{Lap}(0, 1/\varepsilon')$\;
$\tilde{n} = n + L_n$\;
$\mathbf{z}^{(p25)}, \mathbf{z}^{(p75)} = [\,]$\;
\BlankLine
\For{$round = 1$ \KwTo $k$}{
  $M \leftarrow \mathrm{RandomMatching}(n)$\;
  \For{each pair $(i, j) \in M$}{
    \If{$x_j \neq x_i$}{
      $s = (y_j - y_i) / (x_j - x_i)$\;
      $mid_x = (x_i + x_j) / 2$\;
      $mid_y = (y_i + y_j) / 2$\;
      $z_{j,l}^{(p25)} = s \cdot (0.25 - mid_x) + mid_y$\;
      $z_{j,l}^{(p75)} = s \cdot (0.75 - mid_x) + mid_y$\;
      Append $z_{j,l}^{(p25)}$ to $\mathbf{z}^{(p25)}$\;
      Append $z_{j,l}^{(p75)}$ to $\mathbf{z}^{(p75)}$\;
    }
  }
}
\BlankLine
$\tilde{p}_{25} = \mathrm{DPmed}(\mathbf{z}^{(p25)}, \varepsilon', (\tilde{n}, k, \mathrm{hyperparams}))$\;
$\tilde{p}_{75} = \mathrm{DPmed}(\mathbf{z}^{(p75)}, \varepsilon', (\tilde{n}, k, \mathrm{hyperparams}))$\;
\BlankLine
$\hat{\alpha} = (\tilde{p}_{75} - \tilde{p}_{25}) / 0.5$\;
$\hat{\beta} = \tilde{p}_{25} - \hat{\alpha} \cdot 0.25$\;
\Return{$\hat{\alpha}, \hat{\beta}$}
\end{algorithm}

\begin{remark}[DP Median Mechanism]
We reuse the exponential mechanism for private median computation, presented as $\mathrm{DPmed}(\cdot)$ algorithm in \citet{alabi2022differentially}, for our adapted DP-Theil-Sen algorithm.
\end{remark}

\begin{remark}[Experimental Setup]
For experimental evaluation of DP-Theil-Sen, we used the following hyperparameters for $\mathrm{DPmed}(\cdot)$: $k=1$ and $r_L=-2$ and $r_U=2$.
\end{remark}

\section{DP-RSS: Differentially Private Refined Sufficient Statistics}

In this section, we present our main contribution, DP-RSS: Differentially Private Refined Sufficient Statistics, a novel mechanism that exploits algebraic structure in bounded data and applies carefully crafted multidimensional simplex transformations that substantially lower the variance of the required sufficient statistics with the same privacy budget.

Let $\varepsilon > 0$ denote the total privacy budget. We partition this budget into two equal parts: $\varepsilon_1 = \varepsilon_2 = \frac{\varepsilon}{2}$. The first budget $\varepsilon_1$ is allocated to statistics involving only the $x$-coordinates of the data, while the second budget $\varepsilon_2$ is allocated to statistics involving both $x$ and $y$ coordinates (joint statistics). This separation is crucial for our variance reduction technique, as it allows us to construct independent noisy estimators that can be combined via weighted averaging.

\subsection{Extended Free Lunch}

The core idea behind DP-RSS is to exploit the algebraic structure of bounded data through carefully designed simplex transformations. For any bounded variable or function mapping to $[0,1]$, we can decompose it into complementary components that sum to a constant, thereby enabling us to construct multiple independent estimators from a single privacy budget allocation.

Recall that $\{S_x, S_y, S_{x^2}, S_{xy}, n\}$ constitute sufficient statistics for linear regression. Towards obtaining private estimates of these expressions, we project $x$ and $xy$ onto the simplex (the space of non-negative values that sum to $1$) as follows:
\begin{align*}
x_i &\mapsto \left( x_i^2,\; x_i - x_i^2,\; 1 - x_i \right), \\
(x_i, y_i) &\mapsto \left( x_i y_i,\; (1-x_i)y_i,\; 1-y_i \right).
\end{align*}
Summing these transformations over the entire dataset results in the following vectors of statistics:
\begin{align}
&(S_{x^2}, \, S_{x-x^2}, \, S_{1-x}), \label{eq:sum_x} \\
&(S_{xy}, \, S_{(1-x)y}, \, S_{1-y}). \label{eq:sum_xy}
\end{align}

\paragraph{Sensitivity analysis and the key advantage.} A crucial property of these multidimensional simplex transformations is that each of the two vectors in~\eqref{eq:sum_x} and~\eqref{eq:sum_xy} has $\ell_1$-sensitivity equal to $1$, since each record contributes a triplet summing to $1$. The key insight is that adding or removing a single record $(x_i, y_i) \in [0,1]^2$ changes each vector by exactly $(x_i^2, x_i - x_i^2, 1-x_i)$ or $(x_i y_i, (1-x_i)y_i, 1-y_i)$ respectively, and these triplets have $\ell_1$-norm equal to 1 due to the simplex constraint.

Importantly, if we were to privatize individual statistics like $S_{x^2}$ or $S_{xy}$ independently without the simplex structure, each would \emph{also} have sensitivity 1. However, privatizing them separately would require partitioning the privacy budget across multiple independent mechanisms, resulting in significantly higher noise variance. By grouping them via simplex transformations, we can privatize entire groups with a single noise addition per component, and then exploit algebraic redundancy to construct multiple estimators.

These independent estimators, derived from separately privatized groups, can then be optimally combined via inverse-variance weighting to achieve substantial variance reduction without consuming any additional privacy budget. This is the essence of our extension of the ``free lunch'' phenomenon: the simplex structure provides algebraic redundancy that can be exploited for variance reduction through post-processing alone, which is a consequence of the post-processing immunity property of differential privacy.

Algorithm~\ref{alg:dp_rss_compact} presents the complete DP-RSS mechanism, which implements these simplex transformations, adds calibrated Laplace noise to the transformed statistics, and recovers the private regression parameters $(\hat{\alpha},\hat{\beta})$ using optimally weighted refined estimators. We remark that DP-RSS can be readily extended to private polynomial regression (see Section~\ref{sec:polynomial} for a detailed sketch), and that the application of our simplex transformation technique to other problem scenarios may be of independent interest.

We now formally establish the theoretical foundations of DP-RSS. In what follows, we present definitions, lemmas, theorems, and propositions that rigorously characterize the privacy guarantees, variance properties, and optimality of our approach.

\subsection{Noisy Statistics and Their Privacy Guarantees}

Our goal is to privately estimate fundamental statistics such as $S_x = \sum_i x_i$, $S_y = \sum_i y_i$, $S_{x^2} = \sum_i x_i^2$, and $S_{xy} = \sum_i x_i y_i$. A naive approach would add independent Laplace noise to each statistic, but this ignores algebraic relationships that can be exploited for variance reduction. Our key insight is to design \emph{redundant} statistics whose sensitivities sum to exactly 1, enabling us to extract multiple independent estimates of the same quantity and combine them optimally.

All proofs for results in this section are provided in Appendix~\ref{app:proofs}.

\begin{algorithm}[H]
\caption{DP-RSS for Linear Regression}
\label{alg:dp_rss_compact}
\SetKwInOut{Input}{Input}
\SetKwInOut{Output}{Output}
\SetKwInOut{PP}{Privacy}
\Input{$D=\{(x_i,y_i)\}_{i=1}^n$, $(x_i,y_i)\in[0,1]^2$}
\PP{$\varepsilon>0$}
\Output{$(\hat\alpha,\hat\beta)$ or $\perp$}
\BlankLine
\tcp{1. Define privacy guarantees and initialise variables to 0}
$\varepsilon_1=\varepsilon_2=\varepsilon/2,\;\Delta=1$\;
$S_{x^2}=S_{x-x^2}=S_{1-x}=S_{xy}=S_{(1-x)y}=S_{1-y}=0$\;
\BlankLine
\tcp{2. Compute clean statistics}
\For{$(x_i,y_i) \in D$}{
  $S_{x^2} \mathrel{+}= x_i^2$;\quad
  $S_{x-x^2} \mathrel{+}= (x_i - x_i^2)$;\quad
  $S_{1-x} \mathrel{+}= (1-x_i)$\;
  $S_{xy} \mathrel{+}= x_i y_i$;\quad
  $S_{(1-x)y} \mathrel{+}= (1-x_i)y_i$;\quad
  $S_{1-y} \mathrel{+}= (1-y_i)$\;
}
\BlankLine
\tcp{3. Add Laplace noise}
\For{$j\in\{x^2,\,x-x^2,\,1-x\}$}{
  $\tilde{S}_j = S_j + \mathrm{Lap}(0,\,\Delta/\varepsilon_1)$\;
}
\For{$j\in\{xy,\,(1-x)y,\,1-y\}$}{
  $\tilde{S}_j = S_j + \mathrm{Lap}(0,\,\Delta/\varepsilon_2)$\;
}
\BlankLine
\tcp{4. Noisy sample sizes}
$\tilde{n}_x = \tilde{S}_{x^2}+\tilde{S}_{x-x^2}+\tilde{S}_{1-x}$\;
$\tilde{n}_y = \tilde{S}_{xy}+\tilde{S}_{(1-x)y}+\tilde{S}_{1-y}$\;
$\tilde{n} = (\tilde{n}_x+\tilde{n}_y)/2$\;
\If{$\tilde{n}\le 0$}{\Return $(\hat{\alpha}, \hat{\beta}) = (0, 0.5)$\;}
\BlankLine
\tcp{5. Refined estimators}
$\hat{S}_{x^2}=\tfrac{5}{6}\tilde{S}_{x^2}+\tfrac{1}{6}(\tilde{n}_y-\tilde{S}_{x-x^2}-\tilde{S}_{1-x})$\;
$\hat{S}_{xy}=\tfrac{5}{6}\tilde{S}_{xy}+\tfrac{1}{6}(\tilde{n}_x-\tilde{S}_{(1-x)y}-\tilde{S}_{1-y})$\;
$\hat{S}_x=\tfrac{2}{3}(\tilde{S}_{x^2}+\tilde{S}_{x-x^2})+\tfrac{1}{3}(\tilde{n}_y-\tilde{S}_{1-x})$\;
$\hat{S}_y=\tfrac{2}{3}(\tilde{S}_{xy}+\tilde{S}_{(1-x)y})+\tfrac{1}{3}(\tilde{n}_x-\tilde{S}_{1-y})$\;
\BlankLine
$\mathrm{det}=\hat{S}_{x^2}\tilde{n}-\hat{S}_x^2$\;
\If{$\mathrm{det} \le 0$}{\Return $(\hat{\alpha}, \hat{\beta}) = (0, 0.5)$\;}
$\hat{\alpha}=(\tilde{n}\hat{S}_{xy}-\hat{S}_x\hat{S}_y)\,/\,\mathrm{det}$\;
$\hat{\beta}=(\hat{S}_{x^2}\hat{S}_y-\hat{S}_x\hat{S}_{xy})\,/\,\mathrm{det}$\;
\BlankLine
\Return $(\hat{\alpha},\hat{\beta})$
\end{algorithm}

\subsubsection{Group 1: Statistics Based on $x$-Coordinates}

We begin by partitioning the contribution of each $x$-coordinate into three non-negative components. For any $x \in [0,1]$, observe that:
$$
x^2 + (x - x^2) + (1-x) = 1.
$$
This identity is the foundation of our approach: when we sum over all data points, the three resulting statistics will sum to $n$, creating useful redundancy.

\begin{definition}[Group 1 Statistics]
\label{def:group1}
For a dataset $D = \{(x_i, y_i)\}_{i=1}^n$ with $x_i \in [0,1]$, define:
\begin{equation}
S_{x^2} := \sum_{i=1}^{n} x_i^2, \qquad S_{x-x^2} := \sum_{i=1}^{n} (x_i - x_i^2), \qquad S_{1-x} := \sum_{i=1}^{n} (1 - x_i).
\end{equation}
These statistics satisfy the constraint $S_{x^2} + S_{x-x^2} + S_{1-x} = n$.
\end{definition}

\begin{lemma}[Sensitivity of Group 1]
\label{lem:sensitivity_group1}
The function $f(D) = (S_{x^2}, S_{x-x^2}, S_{1-x})$ has global $\ell_1$-sensitivity $\Delta_1(f) = 1$.
\end{lemma}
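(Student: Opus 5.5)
The plan is to compute the effect of a single add/remove step directly and then take the supremum over neighbors. Because each statistic is a sum over records, the difference between neighbors is exactly one record's contribution. Take neighboring datasets $D \sim D'$ in the add/remove model; by symmetry of the $\ell_1$-norm we may assume $D' = D \cup \{(x_0,y_0)\}$ for some record with $x_0 \in [0,1]$. Then
\[
f(D') - f(D) = \bigl(x_0^2,\; x_0 - x_0^2,\; 1 - x_0\bigr).
\]

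The key step is to show that every coordinate of this vector is nonnegative. For $x_0 \in [0,1]$ we have $x_0^2 \ge 0$ and $1 - x_0 \ge 0$. Also $x_0 - x_0^2 = x_0(1-x_0) \ge 0$, since both factors are nonnegative. The absolute values in the $\ell_1$-norm can therefore be dropped, giving
\[
\|f(D') - f(D)\|_1 = x_0^2 + (x_0 - x_0^2) + (1 - x_0) = 1,
\]
by the identity preceding Definition~\ref{def:group1}. Since this holds for every neighboring pair, $\Delta_1(f) \le 1$.

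For the matching lower bound, and hence equality, note that the computation above gives exactly $1$ for every neighboring pair, so the supremum is attained and $\Delta_1(f) = 1$. A single concrete witness, such as adding the record $(0,0)$ to the empty dataset, also suffices.

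The only point needing care is the middle coordinate $x_0 - x_0^2$: the factorization $x_0(1-x_0)$ is what guarantees it is nonnegative. This nonnegativity is precisely why the simplex structure makes the $\ell_1$-norm collapse to the constant sum $1$; it relies on the assumption $x_i \in [0,1]$ and would fail outside that interval.
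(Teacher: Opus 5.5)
Your proposal is correct and follows essentially the same route as the paper: assume without loss of generality that $D'$ adds one record, observe that the difference vector $\bigl((x^*)^2,\, x^*-(x^*)^2,\, 1-x^*\bigr)$ has nonnegative entries summing to $1$, and conclude $\Delta_1(f)=1$. Your explicit factorization $x_0(1-x_0)\ge 0$ for the middle coordinate is a small but welcome addition that the paper leaves implicit.
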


This sensitivity of exactly 1 is crucial as we can privatize \emph{all three} statistics using the same noise scale that would be required to privatize just \emph{one} statistic with sensitivity 1. We add independent Laplace noise to each component:

\begin{definition}[Noisy Group 1 Statistics]
\label{def:noisy_group1}
Let $Z_{11}, Z_{12}, Z_{13} \stackrel{\mathrm{iid}}{\sim} \mathrm{Lap}(1/\varepsilon_1)$. Define:
\begin{equation}
\tilde{S}_{x^2} := S_{x^2} + Z_{11}, \qquad \tilde{S}_{x-x^2} := S_{x-x^2} + Z_{12}, \qquad \tilde{S}_{1-x} := S_{1-x} + Z_{13}.
\end{equation}
\end{definition}

\begin{corollary}[Privacy of Group 1]
\label{cor:privacy_group1}
The release of $(\tilde{S}_{x^2}, \tilde{S}_{x-x^2}, \tilde{S}_{1-x})$ satisfies $\varepsilon_1$-differential privacy.
\end{corollary}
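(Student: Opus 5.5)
The plan is to derive the corollary directly from the standard Laplace mechanism, with Lemma~\ref{lem:sensitivity_group1} supplying the sensitivity bound. The release $(\tilde{S}_{x^2}, \tilde{S}_{x-x^2}, \tilde{S}_{1-x})$ is exactly $f(D) + (Z_{11}, Z_{12}, Z_{13})$, where $f(D) = (S_{x^2}, S_{x-x^2}, S_{1-x})$ and the $Z_{1j}$ are i.i.d.\ $\mathrm{Lap}(1/\varepsilon_1)$. By Lemma~\ref{lem:sensitivity_group1}, $\Delta_1(f) = 1$, so the noise scale $1/\varepsilon_1$ equals $\Delta_1(f)/\varepsilon_1$. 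This is precisely the calibration under which the Laplace mechanism is $\varepsilon_1$-DP.

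To keep the argument self-contained, I will verify the density-ratio bound directly rather than only citing the mechanism. Fix neighbouring datasets $D \sim D'$ in the add/remove model and an output $z \in \mathbb{R}^3$. The output density under $D$ is
\[
p_D(z) = \prod_{j=1}^{3} \frac{\varepsilon_1}{2} \exp\bigl(-\varepsilon_1 |z_j - f_j(D)|\bigr).
\]
Taking the ratio $p_D(z)/p_{D'}(z)$ and applying the reverse triangle inequality coordinatewise gives
\[
\frac{p_D(z)}{p_{D'}(z)} \le \exp\bigl(\varepsilon_1 \|f(D) - f(D')\|_1\bigr) \le e^{\varepsilon_1}.
\]
Integrating this pointwise bound over any measurable set $S$ yields $\Pr[\mathcal{M}(D) \in S] \le e^{\varepsilon_1} \Pr[\mathcal{M}(D') \in S]$, which is the definition of $(\varepsilon_1, 0)$-DP.

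The only substantive input is the sensitivity bound, and Lemma~\ref{lem:sensitivity_group1} already provides it. Concretely, adding or removing a record $x_i$ changes $f$ by the triplet $(x_i^2,\, x_i - x_i^2,\, 1 - x_i)$. Each entry is nonnegative for $x_i \in [0,1]$, and the entries sum to $1$. There is no real obstacle. The one point to get right is that the noise must be added independently per coordinate at scale $1/\varepsilon_1$, and not at $3/\varepsilon_1$: the $\ell_1$-sensitivity is that of the whole vector, not a per-coordinate sum.
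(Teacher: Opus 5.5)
Your proposal is correct and follows the paper's own route: it applies the vector-valued Laplace mechanism with $\Delta_1(f)=1$ from Lemma~\ref{lem:sensitivity_group1} and noise scale $1/\varepsilon_1$, then checks the $e^{\varepsilon_1}$ bound explicitly. Your pointwise density-ratio bound, which you then integrate over $S$, is if anything a slightly cleaner version of the paper's ``formal verification'' step, which bounds the ratio of probabilities directly.
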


\subsubsection{Group 2: Statistics Based on $(x,y)$-Coordinates}

We apply the same principle to statistics involving both coordinates. For any $(x,y) \in [0,1]^2: xy + (1-x)y + (1-y) = y + (1-y) = 1.$
Summing over the dataset yields three statistics that sum to $n$.

\begin{definition}[Group 2 Statistics]
\label{def:group2}
For a dataset $D = \{(x_i, y_i)\}_{i=1}^n$ with $x_i, y_i \in [0,1]$, define:
\begin{equation}
S_{xy} := \sum_{i=1}^{n} x_i y_i, \qquad S_{(1-x)y} := \sum_{i=1}^{n} (1 - x_i) y_i, \qquad S_{1-y} := \sum_{i=1}^{n} (1 - y_i).
\end{equation}
These statistics satisfy the constraint $S_{xy} + S_{(1-x)y} + S_{1-y} = n$.
\end{definition}

\begin{lemma}[Sensitivity of Group 2]
\label{lem:sensitivity_group2}
The function $g(D) = (S_{xy}, S_{(1-x)y}, S_{1-y})$ has global $\ell_1$-sensitivity $\Delta_1(g) = 1$.
\end{lemma}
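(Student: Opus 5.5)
The plan is to mirror the argument for Lemma~\ref{lem:sensitivity_group1}: compute the change in $g$ under a single add/remove operation directly, and show it is a nonnegative vector whose entries sum to one. Under the add/remove model, neighboring datasets $D \sim D'$ differ by exactly one record. Without loss of generality $D' = D \cup \{(x_0,y_0)\}$ with $(x_0,y_0)\in[0,1]^2$; the removal case is symmetric, since only the sign of the difference changes. Because each component of $g$ is a sum over records, all terms from the shared records cancel, and
\[
g(D') - g(D) = \bigl(x_0 y_0,\; (1-x_0) y_0,\; 1-y_0\bigr).
\]

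Next I bound the $\ell_1$-norm of this difference. Since $x_0,y_0\in[0,1]$, each of the three entries $x_0y_0$, $(1-x_0)y_0$ and $1-y_0$ is nonnegative. Absolute values can therefore be dropped, giving
\[
\|g(D')-g(D)\|_1 = x_0y_0 + (1-x_0)y_0 + (1-y_0) = y_0 + (1-y_0) = 1 .
\]
This is the identity $xy+(1-x)y+(1-y)=1$ stated just before Definition~\ref{def:group2}. Hence every neighboring pair yields $\ell_1$ distance exactly $1$, and taking the supremum gives $\Delta_1(g)\le 1$.

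For the matching lower bound, note that the difference equals $1$ for every added record, so the supremum is attained by any neighboring pair and $\Delta_1(g)=1$ exactly. No step is a real obstacle. The only point needing care is the nonnegativity of the three components: it relies on both $x_0\in[0,1]$ (so $1-x_0\ge 0$) and $y_0\in[0,1]$, and it is what lets the $\ell_1$-norm collapse to the simplex sum.
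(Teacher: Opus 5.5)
Your proposal is correct and follows the paper's proof essentially step for step. Like the paper, you take WLOG $D' = D \cup \{(x_0,y_0)\}$, compute the difference vector $\bigl(x_0y_0,\,(1-x_0)y_0,\,1-y_0\bigr)$, use nonnegativity on $[0,1]^2$ to drop the absolute values, and sum to $1$ for every neighboring pair, which gives $\Delta_1(g)=1$ exactly.
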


\begin{definition}[Noisy Group 2 Statistics]
\label{def:noisy_group2}
Let $Z_{21}, Z_{22}, Z_{23} \stackrel{\mathrm{iid}}{\sim} \mathrm{Lap}(1/\varepsilon_2)$, independent of Group 1 noise. Define:
\begin{equation}
\tilde{S}_{xy} := S_{xy} + Z_{21}, \qquad \tilde{S}_{(1-x)y} := S_{(1-x)y} + Z_{22}, \qquad \tilde{S}_{1-y} := S_{1-y} + Z_{23}.
\end{equation}
\end{definition}

\begin{corollary}[Privacy of Group 2]
\label{cor:privacy_group2}
The release of $(\tilde{S}_{xy}, \tilde{S}_{(1-x)y}, \tilde{S}_{1-y})$ satisfies $\varepsilon_2$-differential privacy.
\end{corollary}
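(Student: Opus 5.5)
The argument for Corollary~\ref{cor:privacy_group2} is a direct application of the standard Laplace mechanism to the vector-valued query $g(D) = (S_{xy}, S_{(1-x)y}, S_{1-y})$. It parallels the argument for Corollary~\ref{cor:privacy_group1}.

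\textbf{Step 1 (sensitivity).} We invoke Lemma~\ref{lem:sensitivity_group2}, which gives $\Delta_1(g) = 1$ under add/remove adjacency. For completeness, the reason is the following. If $D'$ is obtained from $D$ by adding or removing a single record $(x_i, y_i) \in [0,1]^2$, then
\[
g(D) - g(D') = \pm\bigl(x_i y_i,\,(1-x_i)y_i,\,1-y_i\bigr).
\]
All three coordinates are non-negative, and they sum to $y_i + (1-y_i) = 1$, so the $\ell_1$-norm of this difference is exactly $1$.

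\textbf{Step 2 (density ratio).} By Definition~\ref{def:noisy_group2}, the release is $g(D) + (Z_{21}, Z_{22}, Z_{23})$ with i.i.d.\ $\mathrm{Lap}(1/\varepsilon_2)$ noise. Its output density at $v \in \mathbb{R}^3$ is
\[
\prod_{j=1}^{3} \frac{\varepsilon_2}{2}\exp\bigl(-\varepsilon_2 |v_j - g_j(D)|\bigr).
\]
By the triangle inequality, the ratio of the densities under $D$ and $D'$ is bounded by
\[
\exp\bigl(\varepsilon_2 \|g(D) - g(D')\|_1\bigr) \le e^{\varepsilon_2}.
\]
Integrating this pointwise bound over any measurable set $S$ yields $\Pr[\mathcal{M}(D) \in S] \le e^{\varepsilon_2} \Pr[\mathcal{M}(D') \in S]$, which is $(\varepsilon_2, 0)$-DP.

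No step here is a real obstacle. The only point needing care is that the noise scale $1/\varepsilon_2$ is calibrated to the \emph{joint} $\ell_1$-sensitivity of the whole triple, rather than to each coordinate separately. This is exactly what Lemma~\ref{lem:sensitivity_group2} supplies via the simplex identity.
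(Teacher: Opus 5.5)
Your proposal is correct and follows the paper's proof: both use Lemma~\ref{lem:sensitivity_group2} to get joint $\ell_1$-sensitivity $1$, then apply the vector Laplace mechanism with scale $1/\varepsilon_2$. Your pointwise density-ratio bound followed by integration over $S$ is a slightly more careful version of the ``formal verification'' the paper gives in the proof of Corollary~\ref{cor:privacy_group1}, to which it defers here.
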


\subsubsection{Overall Privacy Guarantee}

Since the two groups use independent noise, the standard composition theorem applies:

\begin{theorem}[Privacy Composition]
\label{thm:privacy_composition}
The joint release $\mathcal{M}(D) = \bigl( \tilde{S}_{x^2}, \tilde{S}_{x-x^2}, \tilde{S}_{1-x}, \tilde{S}_{xy}, \tilde{S}_{(1-x)y}, \tilde{S}_{1-y} \bigr)$ satisfies $\varepsilon$-differential privacy, where $\varepsilon = \varepsilon_1 + \varepsilon_2$.
\end{theorem}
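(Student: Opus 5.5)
The plan is to derive this from Corollary~\ref{cor:privacy_group1}, Corollary~\ref{cor:privacy_group2}, and basic sequential composition for pure differential privacy, applied to mechanisms with independent noise. Write $\mathcal{M}_1(D) = (\tilde{S}_{x^2}, \tilde{S}_{x-x^2}, \tilde{S}_{1-x})$ and $\mathcal{M}_2(D) = (\tilde{S}_{xy}, \tilde{S}_{(1-x)y}, \tilde{S}_{1-y})$. Then $\mathcal{M}(D) = (\mathcal{M}_1(D), \mathcal{M}_2(D))$. By Definitions~\ref{def:noisy_group1} and~\ref{def:noisy_group2}, the noise vectors $(Z_{11},Z_{12},Z_{13})$ and $(Z_{21},Z_{22},Z_{23})$ are independent. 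Hence, for any fixed dataset $D$, the output distribution of $\mathcal{M}(D)$ is the product of the distributions of $\mathcal{M}_1(D)$ and $\mathcal{M}_2(D)$.

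To keep the argument self-contained, I will work with densities rather than just citing composition. Each $\mathcal{M}_j(D)$ has a density on $\mathbb{R}^3$, given by a product of Laplace densities centered at the true statistics. For $\mathcal{M}_1$ this is
\[
p_1(u \mid D) = \prod_{k=1}^3 \frac{\varepsilon_1}{2}\exp\bigl(-\varepsilon_1 |u_k - f_k(D)|\bigr).
\]
For neighboring $D \sim D'$ under add/remove, the triangle inequality gives
\[
\frac{p_1(u\mid D)}{p_1(u\mid D')} \le \exp\bigl(\varepsilon_1 \|f(D)-f(D')\|_1\bigr) \le e^{\varepsilon_1},
\]
where the last step uses Lemma~\ref{lem:sensitivity_group1}. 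This pointwise bound is exactly what underlies Corollary~\ref{cor:privacy_group1}. The same argument with Lemma~\ref{lem:sensitivity_group2} bounds the ratio for $p_2$ by $e^{\varepsilon_2}$.

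Multiplying the two bounds shows that the joint density ratio of $\mathcal{M}$ satisfies $p(u,v\mid D)/p(u,v\mid D') \le e^{\varepsilon_1+\varepsilon_2}$ pointwise. Integrating over any measurable $S \subseteq \mathbb{R}^6$ then gives $\Pr[\mathcal{M}(D)\in S] \le e^{\varepsilon}\Pr[\mathcal{M}(D')\in S]$ with $\delta=0$. Both directions of neighboring, addition and removal, are covered because the sensitivity lemmas are symmetric in $D$ and $D'$.

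The main point needing care is that both groups are computed on the same dataset, so a single added or removed record affects both vectors simultaneously. The argument handles this because the privacy loss is bounded separately for each group and the losses add: the $e^{\varepsilon_1}$ and $e^{\varepsilon_2}$ factors multiply, with no extra cost from the shared record. Independence of the two noise groups is essential here, since it is what makes the joint density factor into a product. Finally, since Algorithm~\ref{alg:dp_rss_compact} uses $\varepsilon_1=\varepsilon_2=\varepsilon/2$, the total is $\varepsilon$. Every downstream quantity ($\tilde{n}$, the refined estimators, $\hat\alpha$, $\hat\beta$) inherits this guarantee by post-processing.
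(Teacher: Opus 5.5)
Your proposal is correct and matches the paper's proof: you split $\mathcal{M}$ into the two group mechanisms, invoke Corollaries~\ref{cor:privacy_group1} and~\ref{cor:privacy_group2}, and compose them to get $\varepsilon_1+\varepsilon_2$. The only difference is that the paper simply cites the basic composition theorem for pure DP, whereas you prove that step inline with the pointwise product-density ratio bound, which uses the independence of the two noise groups.
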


In what follows, we set $\varepsilon_1 = \varepsilon_2 = \varepsilon/2$ for balanced allocation.

\subsection{Post-Processing: Constructing Refined Estimators}

We now show how the redundancy built into our statistics enables variance reduction. The key observation is that we can construct \emph{multiple independent unbiased estimators} of the same target quantity, then combine them optimally. Since differential privacy is immune to post-processing, all operations in this section preserve our $\varepsilon$-DP guarantee.

\subsubsection{Recovering the Sample Size}

The constraints $S_{x^2} + S_{x-x^2} + S_{1-x} = n$ and $S_{xy} + S_{(1-x)y} + S_{1-y} = n$ allow us to estimate $n$ in two independent ways:

\begin{definition}[Private Sample Size]
\label{def:noisy_n}
Define:
\begin{align}
\tilde{n}_x &:= \tilde{S}_{x^2} + \tilde{S}_{x-x^2} + \tilde{S}_{1-x}, \qquad \tilde{n}_y := \tilde{S}_{xy} + \tilde{S}_{(1-x)y} + \tilde{S}_{1-y}, \\
\tilde{n} &:= \frac{\tilde{n}_x + \tilde{n}_y}{2}.
\end{align}
\end{definition}

Both $\tilde{n}_x$ and $\tilde{n}_y$ are unbiased estimators of $n$: adding Laplace noise with mean zero preserves the sum's expectation. Crucially, $\tilde{n}_x$ depends only on Group 1 noise $(Z_{11}, Z_{12}, Z_{13})$, while $\tilde{n}_y$ depends only on Group 2 noise $(Z_{21}, Z_{22}, Z_{23})$. This independence will be essential for constructing refined estimators.

\subsubsection{The Principle of Optimal Combination}

When we have two independent unbiased estimators of the same quantity, classical statistics tells us how to combine them optimally:

\begin{lemma}[Optimal Weighted Average]
\label{lem:optimal_weights}
Let $\hat{\theta}_1, \hat{\theta}_2$ be independent unbiased estimators of $\theta$ with variances $\sigma_1^2, \sigma_2^2$. The minimum-variance unbiased combination $\hat{\theta} = w_1\hat{\theta}_1 + w_2\hat{\theta}_2$ with $w_1 + w_2 = 1$ uses weights inversely proportional to variance:
$w_1^* = \frac{\sigma_2^2}{\sigma_1^2 + \sigma_2^2}, w_2^* = \frac{\sigma_1^2}{\sigma_1^2 + \sigma_2^2},$
achieving variance $\mathrm{Var}(\hat{\theta}^*) = \bigl(\sigma_1^{-2} + \sigma_2^{-2}\bigr)^{-1} < \min(\sigma_1^2, \sigma_2^2)$.
\end{lemma}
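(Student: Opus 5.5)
The plan is to reduce the problem to a one-variable quadratic minimization. Writing $w_2 = 1 - w_1$, the combination $\hat{\theta} = w_1\hat{\theta}_1 + (1-w_1)\hat{\theta}_2$ is unbiased for every real $w_1$, by linearity of expectation:
\[
\mathbb{E}[\hat\theta] = w_1\theta + (1-w_1)\theta = \theta .
\]
So the constraint $w_1 + w_2 = 1$ is exactly what preserves unbiasedness. By independence the cross-covariance term vanishes, and
\[
\mathrm{Var}(\hat\theta) = V(w_1) := w_1^2\sigma_1^2 + (1-w_1)^2\sigma_2^2 .
\]
I will assume $\sigma_1^2, \sigma_2^2 > 0$, which holds in our application since each estimator carries nondegenerate Laplace noise.

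Next I minimize $V$ over $w_1 \in \mathbb{R}$. Since $V$ is a strictly convex quadratic, its leading coefficient being $\sigma_1^2+\sigma_2^2>0$, the stationary point is the unique global minimizer. Setting
\[
V'(w_1) = 2w_1\sigma_1^2 - 2(1-w_1)\sigma_2^2 = 0
\]
gives $w_1^* = \sigma_2^2/(\sigma_1^2+\sigma_2^2)$, and hence $w_2^* = 1-w_1^* = \sigma_1^2/(\sigma_1^2+\sigma_2^2)$. Substituting back,
\[
V(w_1^*) = \frac{\sigma_2^4\sigma_1^2 + \sigma_1^4\sigma_2^2}{(\sigma_1^2+\sigma_2^2)^2} = \frac{\sigma_1^2\sigma_2^2}{\sigma_1^2+\sigma_2^2} = \bigl(\sigma_1^{-2}+\sigma_2^{-2}\bigr)^{-1}.
\]
A Cauchy–Schwarz argument, $1 = (w_1\sigma_1\cdot\sigma_1^{-1} + w_2\sigma_2\cdot\sigma_2^{-1})^2 \le V\,(\sigma_1^{-2}+\sigma_2^{-2})$, would give the same bound directly, but the calculus route is simpler.

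For the strict inequality I observe that
\[
\frac{\sigma_1^2\sigma_2^2}{\sigma_1^2+\sigma_2^2} = \sigma_1^2\cdot\frac{\sigma_2^2}{\sigma_1^2+\sigma_2^2} < \sigma_1^2 ,
\]
because $\sigma_1^2>0$ makes the fraction strictly less than $1$. The symmetric argument gives the bound $<\sigma_2^2$, and together these yield $<\min(\sigma_1^2,\sigma_2^2)$.

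There is no real obstacle here. The only point needing care is the nondegeneracy assumption: if either variance were $0$, the inequality would become an equality, or the weights would need an interpretation by limits. I would state that positivity is assumed and that it holds in our setting.
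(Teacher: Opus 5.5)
Your proof is correct and follows the same route as the paper: you substitute $w_2 = 1-w_1$, use independence to get $V(w_1)=w_1^2\sigma_1^2+(1-w_1)^2\sigma_2^2$, minimize this strictly convex quadratic at its stationary point, and substitute back to obtain $\sigma_1^2\sigma_2^2/(\sigma_1^2+\sigma_2^2)$. You also verify the strict inequality $<\min(\sigma_1^2,\sigma_2^2)$ and state the needed assumption $\sigma_1^2,\sigma_2^2>0$, two points the paper's proof leaves implicit.
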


The final inequality shows that optimal combination \emph{always} improves upon either estimator alone. We now apply this principle systematically.

\subsubsection{Refined Estimator for $S_{x^2}$}

The statistic $S_{x^2}$ can be estimated in two ways. First estimator (direct): Simply use the noisy version $\tilde{S}_{x^2}$, and second estimator (indirect): From the constraint $S_{x^2} + S_{x-x^2} + S_{1-x} = n$, we have $S_{x^2} = n - (S_{x-x^2} + S_{1-x})$. Using Group 2's estimate of $n$:
$\hat{S}_{x^2}^{(2)} := \tilde{n}_y - (\tilde{S}_{x-x^2} + \tilde{S}_{1-x}).$

The crucial observation is that these two estimators are \emph{independent}: $\tilde{S}_{x^2}$ depends only on $Z_{11}$, while $\hat{S}_{x^2}^{(2)}$ depends on the disjoint set $\{Z_{12}, Z_{13}, Z_{21}, Z_{22}, Z_{23}\}$.

\begin{proposition}[Two Estimators for $S_{x^2}$]
\label{prop:estimators_x2}
The estimators $\hat{S}_{x^2}^{(1)} := \tilde{S}_{x^2}$ and $\hat{S}_{x^2}^{(2)} := \tilde{n}_y - (\tilde{S}_{x-x^2} + \tilde{S}_{1-x})$ are independent unbiased estimators of $S_{x^2}$ with variances $8/\varepsilon^2$ and $40/\varepsilon^2$, respectively.
\end{proposition}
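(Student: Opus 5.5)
The plan is to write both estimators explicitly as ``true statistic plus a linear combination of the Laplace noise variables.'' Unbiasedness, independence and the variances can then all be read off directly.

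\emph{Unbiasedness.} By Definition~\ref{def:noisy_group1}, $\hat{S}_{x^2}^{(1)} = S_{x^2} + Z_{11}$. Expanding $\tilde{n}_y$ from Definition~\ref{def:noisy_n} and using the constraint in Definition~\ref{def:group2}, we get $\tilde{n}_y = n + Z_{21}+Z_{22}+Z_{23}$. Subtracting $\tilde{S}_{x-x^2}+\tilde{S}_{1-x} = S_{x-x^2}+S_{1-x}+Z_{12}+Z_{13}$ and applying the constraint of Definition~\ref{def:group1}, $n - S_{x-x^2} - S_{1-x} = S_{x^2}$, gives
\[
\hat{S}_{x^2}^{(2)} = S_{x^2} - Z_{12} - Z_{13} + Z_{21} + Z_{22} + Z_{23}.
\]
Since every $Z_{jk}$ has mean zero, both estimators are unbiased.

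\emph{Independence.} The Group 1 noises $Z_{11},Z_{12},Z_{13}$ are i.i.d.\ by Definition~\ref{def:noisy_group1}, and by Definition~\ref{def:noisy_group2} they are independent of the i.i.d.\ Group 2 noises. Hence all six $Z_{jk}$ are mutually independent. The first estimator is a function of $Z_{11}$ alone, and the second is a function of the disjoint set $\{Z_{12},Z_{13},Z_{21},Z_{22},Z_{23}\}$. Functions of disjoint subsets of independent variables are independent, so the two estimators are independent.

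\emph{Variances.} With $\varepsilon_1=\varepsilon_2=\varepsilon/2$, each $Z_{jk}\sim\mathrm{Lap}(2/\varepsilon)$, so $\mathrm{Var}(Z_{jk}) = 2(2/\varepsilon)^2 = 8/\varepsilon^2$. This immediately gives $\mathrm{Var}(\hat{S}_{x^2}^{(1)}) = 8/\varepsilon^2$. For the second estimator, the variance of a sum of independent terms is additive, and the sign flips on $Z_{12},Z_{13}$ do not change variances. Summing five terms of $8/\varepsilon^2$ gives $40/\varepsilon^2$.

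No step is a real obstacle. The only point needing care is the cancellation of the true statistics via the two constraints, so that no data-dependent term survives in the second estimator.
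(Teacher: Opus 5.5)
Your proposal is correct and follows essentially the same route as the paper: you use the two constraints to write each estimator as $S_{x^2}$ plus a signed sum of Laplace noises, read off the variances $8/\varepsilon^2$ and $5\cdot 8/\varepsilon^2 = 40/\varepsilon^2$, and get independence from the disjoint noise sets $\{Z_{11}\}$ and $\{Z_{12},Z_{13},Z_{21},Z_{22},Z_{23}\}$. You are slightly more explicit than the paper's proof on two points, namely unbiasedness via zero-mean noise and the mutual independence of all six noise variables.
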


Applying Lemma~\ref{lem:optimal_weights}, the optimal weights are $5/6$ and $1/6$ (proportional to $40$ and $8$, i.e., inversely proportional to variances):

\begin{theorem}[Refined Estimator for $S_{x^2}$]
\label{thm:refined_x2}
Define:
$$\hat{S}_{x^2} := \frac{5}{6}\,\tilde{S}_{x^2} + \frac{1}{6}\bigl[\tilde{n}_y - (\tilde{S}_{x-x^2} + \tilde{S}_{1-x})\bigr].$$
Then $\hat{S}_{x^2}$ is unbiased with variance $\mathrm{Var}(\hat{S}_{x^2}) = \dfrac{20}{3\varepsilon^2} \approx \dfrac{6.67}{\varepsilon^2}$.
\end{theorem}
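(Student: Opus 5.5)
The plan is to reduce the statement to Proposition~\ref{prop:estimators_x2} and Lemma~\ref{lem:optimal_weights}, and then to confirm the variance by a direct computation in terms of the underlying Laplace noises.

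\textbf{Unbiasedness.} By Proposition~\ref{prop:estimators_x2}, both $\hat{S}_{x^2}^{(1)}=\tilde{S}_{x^2}$ and $\hat{S}_{x^2}^{(2)}=\tilde{n}_y-(\tilde{S}_{x-x^2}+\tilde{S}_{1-x})$ are unbiased for $S_{x^2}$. Since $\hat{S}_{x^2}=\tfrac56\hat{S}_{x^2}^{(1)}+\tfrac16\hat{S}_{x^2}^{(2)}$ with weights summing to one, linearity of expectation gives $\mathbb{E}[\hat{S}_{x^2}]=S_{x^2}$.

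\textbf{Optimal weights and variance.} The same proposition gives independence and the variances $\sigma_1^2=8/\varepsilon^2$ and $\sigma_2^2=40/\varepsilon^2$. Lemma~\ref{lem:optimal_weights} then yields the weights $w_1^*=40/48=5/6$ and $w_2^*=8/48=1/6$, which are exactly the coefficients in the statement. The resulting variance is
\[
\Bigl(\tfrac{\varepsilon^2}{8}+\tfrac{\varepsilon^2}{40}\Bigr)^{-1}=\tfrac{40}{6\varepsilon^2}=\tfrac{20}{3\varepsilon^2}.
\]

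\textbf{Direct check via the noise terms.} As a cross-check that does not rely on the proposition, I would expand $\hat{S}_{x^2}$ in the noise variables. Substituting $\tilde{n}_y=n+Z_{21}+Z_{22}+Z_{23}$ and the identity $S_{x-x^2}+S_{1-x}=n-S_{x^2}$ gives
\[
\hat{S}_{x^2}-S_{x^2}=\tfrac56 Z_{11}+\tfrac16\bigl(Z_{21}+Z_{22}+Z_{23}-Z_{12}-Z_{13}\bigr).
\]
With $\varepsilon_1=\varepsilon_2=\varepsilon/2$, each $Z_{jk}$ has variance $2(2/\varepsilon)^2=8/\varepsilon^2$, and the six noises are mutually independent. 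The variance is therefore
\[
\frac{8}{\varepsilon^2}\Bigl(\tfrac{25}{36}+\tfrac{5}{36}\Bigr)=\frac{8}{\varepsilon^2}\cdot\tfrac{30}{36}=\frac{20}{3\varepsilon^2},
\]
which matches the previous computation.

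\textbf{Main obstacle.} The only substantive point is the independence claim: the two estimators depend on disjoint noise sets, $\{Z_{11}\}$ versus $\{Z_{12},Z_{13},Z_{21},Z_{22},Z_{23}\}$. This holds because the Group~1 and Group~2 noises are drawn independently, as specified in Definitions~\ref{def:noisy_group1} and~\ref{def:noisy_group2}. Everything else is routine arithmetic.
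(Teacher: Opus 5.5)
Your proposal is correct and follows the paper's proof: you use Proposition~\ref{prop:estimators_x2} for independence and for the variances $8/\varepsilon^2$ and $40/\varepsilon^2$, then apply Lemma~\ref{lem:optimal_weights} to get the weights $5/6$ and $1/6$ and the variance $20/(3\varepsilon^2)$. Your unbiasedness argument via linearity and your direct expansion in the noises $Z_{11},\dots,Z_{23}$ are both sound, and they make explicit steps the paper leaves implicit.
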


This represents a $17\%$ variance reduction compared to using $\tilde{S}_{x^2}$ alone (variance $8/\varepsilon^2$).

\subsubsection{Refined Estimator for $S_{xy}$}

By symmetry, we can estimate $S_{xy}$ in two ways: directly via $\tilde{S}_{xy}$, or indirectly using Group 1's estimate of $n$ combined with the Group 2 constraint:

\begin{proposition}[Two Estimators for $S_{xy}$]
\label{prop:estimators_xy}
The estimators $\hat{S}_{xy}^{(1)} := \tilde{S}_{xy}$ and $\hat{S}_{xy}^{(2)} := \tilde{n}_x - (\tilde{S}_{(1-x)y} + \tilde{S}_{1-y})$ are independent unbiased estimators of $S_{xy}$ with variances $8/\varepsilon^2$ and $40/\varepsilon^2$, respectively.
\end{proposition}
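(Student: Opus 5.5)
The proof mirrors Proposition~\ref{prop:estimators_x2}, with the roles of the two groups swapped. I will write each estimator as the true statistic plus a linear combination of the noise variables from Definitions~\ref{def:noisy_group1} and~\ref{def:noisy_group2}, and then read off unbiasedness, independence and variance directly.

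\textbf{Noise decomposition.} For the direct estimator, $\hat{S}_{xy}^{(1)} = S_{xy} + Z_{21}$. For the indirect estimator, I expand $\tilde{n}_x = S_{x^2}+S_{x-x^2}+S_{1-x} + Z_{11}+Z_{12}+Z_{13}$. By the Group~1 constraint in Definition~\ref{def:group1}, the sum of the three true statistics equals $n$. Subtracting $\tilde{S}_{(1-x)y}+\tilde{S}_{1-y} = S_{(1-x)y}+S_{1-y}+Z_{22}+Z_{23}$ and applying the Group~2 constraint $n - S_{(1-x)y} - S_{1-y} = S_{xy}$ gives
\[
\hat{S}_{xy}^{(2)} = S_{xy} + (Z_{11}+Z_{12}+Z_{13}) - (Z_{22}+Z_{23}).
\]
This identity is the only step with real content. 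It requires that the same true $n$ appear in both groups' constraints, which holds because both groups are computed on the same dataset.

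\textbf{Unbiasedness and independence.} All $Z$'s have mean zero, so both estimators are unbiased. The direct estimator depends only on $Z_{21}$. The indirect one depends only on $\{Z_{11},Z_{12},Z_{13},Z_{22},Z_{23}\}$. These index sets are disjoint, and all six noise variables are mutually independent: the variables within each group are i.i.d., and Group~2 noise is independent of Group~1 noise by Definition~\ref{def:noisy_group2}. Hence the two estimators, being functions of disjoint sets of independent variables, are independent.

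\textbf{Variances.} With $\varepsilon_1=\varepsilon_2=\varepsilon/2$, each $Z$ is $\mathrm{Lap}(2/\varepsilon)$, so it has variance $2(2/\varepsilon)^2 = 8/\varepsilon^2$. This gives $\mathrm{Var}(\hat{S}_{xy}^{(1)}) = 8/\varepsilon^2$. The second estimator is a $\pm1$ combination of five independent such variables, so by independence its variance is $5\cdot 8/\varepsilon^2 = 40/\varepsilon^2$.

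There is no genuine obstacle. The only care needed is the bookkeeping in the noise decomposition, confirming that the true statistics cancel exactly to $S_{xy}$ and that $Z_{21}$ does not appear in the indirect estimator.
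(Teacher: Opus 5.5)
Your proposal is correct and follows essentially the same route as the paper's proof. Like the paper, you write each estimator as $S_{xy}$ plus noise, cancel the true statistics using the two constraints to get $S_{xy} + Z_{11}+Z_{12}+Z_{13}-Z_{22}-Z_{23}$, and then read off unbiasedness, independence (from the disjoint noise sets $\{Z_{21}\}$ and $\{Z_{11},Z_{12},Z_{13},Z_{22},Z_{23}\}$), and the variances $8/\varepsilon^2$ and $5\cdot 8/\varepsilon^2 = 40/\varepsilon^2$.
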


Applying Lemma~\ref{lem:optimal_weights}, the optimal weights are $5/6$ and $1/6$:

\begin{theorem}[Refined Estimator for $S_{xy}$]
\label{thm:refined_xy}
Define:
$$\hat{S}_{xy} := \frac{5}{6}\,\tilde{S}_{xy} + \frac{1}{6}\bigl[\tilde{n}_x - (\tilde{S}_{(1-x)y} + \tilde{S}_{1-y})\bigr].$$
Then $\hat{S}_{xy}$ is unbiased with variance $\mathrm{Var}(\hat{S}_{xy}) = \dfrac{20}{3\varepsilon^2}$.
\end{theorem}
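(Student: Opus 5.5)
The plan mirrors the argument for Theorem~\ref{thm:refined_x2}, with the two groups exchanging roles. By Proposition~\ref{prop:estimators_xy}, the estimators $\hat{S}_{xy}^{(1)} = \tilde{S}_{xy}$ and $\hat{S}_{xy}^{(2)} = \tilde{n}_x - (\tilde{S}_{(1-x)y} + \tilde{S}_{1-y})$ are independent unbiased estimators of $S_{xy}$. Their variances are $\sigma_1^2 = 8/\varepsilon^2$ and $\sigma_2^2 = 40/\varepsilon^2$. With these in hand, the theorem reduces to applying Lemma~\ref{lem:optimal_weights} and checking the constants.

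\textbf{Unbiasedness.} $\hat{S}_{xy}$ is an affine combination of the two estimators with weights $5/6 + 1/6 = 1$. Linearity of expectation therefore gives $\mathbb{E}[\hat{S}_{xy}] = S_{xy}$. As a sanity check, one can write it directly: $\hat{S}_{xy} = S_{xy} + \tfrac{5}{6}Z_{21} + \tfrac{1}{6}(Z_{11}+Z_{12}+Z_{13} - Z_{22} - Z_{23})$. This uses the constraint $S_{xy}+S_{(1-x)y}+S_{1-y} = n$ from Definition~\ref{def:group2} and $\tilde{n}_x = n + Z_{11}+Z_{12}+Z_{13}$. All noise terms are zero-mean.

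\textbf{Variance.} From the explicit representation, the six noise variables are mutually independent. Each $Z_{1j}, Z_{2j} \sim \mathrm{Lap}(2/\varepsilon)$, so each has variance $8/\varepsilon^2$. Hence
\[
\mathrm{Var}(\hat{S}_{xy}) = \tfrac{25}{36}\cdot\tfrac{8}{\varepsilon^2} + \tfrac{5}{36}\cdot\tfrac{8}{\varepsilon^2} = \tfrac{30\cdot 8}{36\,\varepsilon^2} = \tfrac{20}{3\varepsilon^2}.
\]
Equivalently, Lemma~\ref{lem:optimal_weights} gives optimal weights $\sigma_2^2/(\sigma_1^2+\sigma_2^2) = 40/48 = 5/6$ and $1/6$. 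The resulting variance is $(\varepsilon^2/8 + \varepsilon^2/40)^{-1} = 40/(6\varepsilon^2) = 20/(3\varepsilon^2)$, which agrees with the direct computation.

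\textbf{Main obstacle.} The only delicate point is confirming that the noise components entering the two estimators are genuinely disjoint. $Z_{21}$ appears only in the direct estimator, while $\{Z_{11},Z_{12},Z_{13},Z_{22},Z_{23}\}$ appear only in the indirect one. This is what licenses adding the variances, with no cross-covariance terms. It follows from the independence of Group~1 and Group~2 noise (Definition~\ref{def:noisy_group2}) and the i.i.d.\ structure within each group, as already used in Proposition~\ref{prop:estimators_xy}.
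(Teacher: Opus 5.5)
Your proposal is correct and follows essentially the same route as the paper: you take the variances $8/\varepsilon^2$ and $40/\varepsilon^2$ from Proposition~\ref{prop:estimators_xy} and apply Lemma~\ref{lem:optimal_weights} to get the weights $5/6,\,1/6$ and the variance $20/(3\varepsilon^2)$. Your explicit expansion $\hat{S}_{xy} = S_{xy} + \tfrac{5}{6}Z_{21} + \tfrac{1}{6}(Z_{11}+Z_{12}+Z_{13}-Z_{22}-Z_{23})$ is a valid cross-check. It also makes unbiasedness explicit, which the paper leaves implicit.
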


\subsubsection{Refined Estimator for $S_x$}

For $S_x = \sum_i x_i$, we exploit the identity $S_x = S_{x^2} + S_{x-x^2}$ (since $x = x^2 + (x - x^2)$) and the constraint $S_x = n - S_{1-x}$:

\begin{proposition}[Two Estimators for $S_x$]
\label{prop:estimators_x}
The estimators $\hat{S}_x^{(1)} := \tilde{S}_{x^2} + \tilde{S}_{x-x^2}$ and $\hat{S}_x^{(2)} := \tilde{n}_y - \tilde{S}_{1-x}$ are independent unbiased estimators of $S_x$ with variances $16/\varepsilon^2$ and $32/\varepsilon^2$, respectively.
\end{proposition}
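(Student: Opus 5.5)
The plan is to write each estimator explicitly as the true statistic plus a linear combination of the independent Laplace noises $Z_{11},Z_{12},Z_{13}$ (each $\mathrm{Lap}(2/\varepsilon)$, since $\varepsilon_1=\varepsilon/2$) and $Z_{21},Z_{22},Z_{23}$ (each $\mathrm{Lap}(2/\varepsilon)$). Unbiasedness, independence and the variances then all follow from this noise decomposition.

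\textbf{Noise decompositions.} For the first estimator, Definition~\ref{def:noisy_group1} and the identity $x=x^2+(x-x^2)$ give
\begin{equation*}
\hat{S}_x^{(1)} = S_{x^2}+S_{x-x^2} + Z_{11}+Z_{12} = S_x + Z_{11}+Z_{12}.
\end{equation*}
For the second, Definition~\ref{def:noisy_n} and the constraint of Definition~\ref{def:group2} give $\tilde{n}_y = n + Z_{21}+Z_{22}+Z_{23}$. Using $\tilde{S}_{1-x}=n-S_x+Z_{13}$, we get
\begin{equation*}
\hat{S}_x^{(2)} = S_x + Z_{21}+Z_{22}+Z_{23} - Z_{13}.
\end{equation*}

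\textbf{Unbiasedness and independence.} Since every $Z_{jk}$ has mean zero, both estimators have expectation $S_x$. The first depends only on $\{Z_{11},Z_{12}\}$ and the second only on $\{Z_{13},Z_{21},Z_{22},Z_{23}\}$. These index sets are disjoint, and the six noise variables are mutually independent: iid within each group, and Group 2 independent of Group 1 by Definition~\ref{def:noisy_group2}. Measurable functions of disjoint subfamilies of independent variables are independent, so the two estimators are independent.

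\textbf{Variances.} Each $\mathrm{Lap}(2/\varepsilon)$ variable has variance $2(2/\varepsilon)^2=8/\varepsilon^2$. Variances of independent summands add, and the sign on $Z_{13}$ is irrelevant. Hence
\begin{equation*}
\mathrm{Var}\bigl(\hat{S}_x^{(1)}\bigr)=2\cdot 8/\varepsilon^2=16/\varepsilon^2, \qquad \mathrm{Var}\bigl(\hat{S}_x^{(2)}\bigr)=4\cdot 8/\varepsilon^2=32/\varepsilon^2.
\end{equation*}

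The only step needing care is the bookkeeping in the second decomposition. The true statistics must cancel exactly, via $n-(n-S_x)=S_x$, and one must check that no noise term is shared between the two estimators. In particular, $Z_{13}$ enters only $\hat{S}_x^{(2)}$, which is exactly what makes the estimators independent.
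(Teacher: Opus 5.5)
Your proposal is correct and matches the paper's proof essentially step for step. The paper likewise writes $\hat{S}_x^{(1)} = S_x + Z_{11}+Z_{12}$ and $\hat{S}_x^{(2)} = S_x + Z_{21}+Z_{22}+Z_{23}-Z_{13}$, adds the per-noise variances of $8/\varepsilon^2$, and derives independence from the disjoint noise sets; your explicit remark on unbiasedness via zero-mean noise is a small addition the paper leaves implicit.
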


The first estimator uses two Group 1 noises $(Z_{11}, Z_{12})$; the second uses $(Z_{13}, Z_{21}, Z_{22}, Z_{23})$---disjoint sets, ensuring independence.
Applying Lemma~\ref{lem:optimal_weights}, the optimal weights are $2/3$ and $1/3$:

\begin{theorem}[Refined Estimator for $S_x$]
\label{thm:refined_x}
Define:
$$\hat{S}_x := \frac{2}{3}\bigl(\tilde{S}_{x^2} + \tilde{S}_{x-x^2}\bigr) + \frac{1}{3}\bigl(\tilde{n}_y - \tilde{S}_{1-x}\bigr).$$
Then $\hat{S}_x$ is unbiased with variance $\mathrm{Var}(\hat{S}_x) = \dfrac{32}{3\varepsilon^2} \approx \dfrac{10.67}{\varepsilon^2}$.
\end{theorem}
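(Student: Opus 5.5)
The plan is to write $\hat{S}_x$ explicitly as the true value $S_x$ plus a linear combination of the six independent noise variables $Z_{11},Z_{12},Z_{13},Z_{21},Z_{22},Z_{23}$. We then read off unbiasedness from the zero means and compute the variance from independence. Alternatively, and more in the spirit of the paper, we can invoke Proposition~\ref{prop:estimators_x} together with Lemma~\ref{lem:optimal_weights}. We carry out the second route and use the first as a check.

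\textbf{Step 1 (unbiasedness).} By Proposition~\ref{prop:estimators_x}, both $\hat{S}_x^{(1)}=\tilde{S}_{x^2}+\tilde{S}_{x-x^2}$ and $\hat{S}_x^{(2)}=\tilde{n}_y-\tilde{S}_{1-x}$ are unbiased for $S_x$. Concretely, $\mathbb{E}[\hat{S}_x^{(1)}]=S_{x^2}+S_{x-x^2}=S_x$. Also, $\mathbb{E}[\hat{S}_x^{(2)}]=n-S_{1-x}=S_x$, using the Group 2 constraint of Definition~\ref{def:group2} for $\mathbb{E}[\tilde{n}_y]=n$. Since the weights satisfy $\tfrac23+\tfrac13=1$, linearity of expectation gives $\mathbb{E}[\hat{S}_x]=S_x$.

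\textbf{Step 2 (variance).} First we confirm the component variances. Since $\varepsilon_1=\varepsilon_2=\varepsilon/2$, each $Z_{jk}\sim\mathrm{Lap}(2/\varepsilon)$ has variance $2(2/\varepsilon)^2=8/\varepsilon^2$. Next we write $\hat{S}_x^{(1)}-S_x=Z_{11}+Z_{12}$, whose variance is $16/\varepsilon^2$. Similarly $\hat{S}_x^{(2)}-S_x=Z_{21}+Z_{22}+Z_{23}-Z_{13}$, whose variance is $32/\varepsilon^2$. The two noise sets $\{Z_{11},Z_{12}\}$ and $\{Z_{13},Z_{21},Z_{22},Z_{23}\}$ are disjoint, so the two estimators are independent, as recorded in Proposition~\ref{prop:estimators_x}.

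\textbf{Step 3 (combination).} Lemma~\ref{lem:optimal_weights} gives optimal weights $w_1^*=32/48=2/3$ and $w_2^*=16/48=1/3$, which are exactly the weights in $\hat{S}_x$. The resulting variance is $(\varepsilon^2/16+\varepsilon^2/32)^{-1}=32/(3\varepsilon^2)$. As a direct check, the variance is $(4/9)(16/\varepsilon^2)+(1/9)(32/\varepsilon^2)=(64+32)/(9\varepsilon^2)=32/(3\varepsilon^2)$.

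There is no real obstacle here. The only point needing care is the bookkeeping in Step 2: each noise term must appear in exactly one of the two estimators, so that no covariance terms arise. We also have to use the noise scale $1/\varepsilon_i=2/\varepsilon$ rather than $1/\varepsilon$.
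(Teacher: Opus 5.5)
Your proposal is correct and follows essentially the same route as the paper: you invoke Proposition~\ref{prop:estimators_x} for the two independent unbiased estimators with variances $16/\varepsilon^2$ and $32/\varepsilon^2$, then apply Lemma~\ref{lem:optimal_weights} to obtain weights $2/3$, $1/3$ and variance $32/(3\varepsilon^2)$. Two parts of your write-up go beyond the paper's proof, which leaves them implicit: the explicit unbiasedness argument via linearity with weights summing to one, and the direct check $(4/9)(16/\varepsilon^2)+(1/9)(32/\varepsilon^2)=32/(3\varepsilon^2)$.
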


This is a $33\%$ improvement over the first estimator alone.

\subsubsection{Refined Estimator for $S_y$}

Similarly, $S_y = S_{xy} + S_{(1-x)y}$ (since $y = xy + (1-x)y$) and $S_y = n - S_{1-y}$:

\begin{proposition}[Two Estimators for $S_y$]
\label{prop:estimators_y}
The estimators $\hat{S}_y^{(1)} := \tilde{S}_{xy} + \tilde{S}_{(1-x)y}$ and $\hat{S}_y^{(2)} := \tilde{n}_x - \tilde{S}_{1-y}$ are independent unbiased estimators of $S_y$ with variances $16/\varepsilon^2$ and $32/\varepsilon^2$, respectively.
\end{proposition}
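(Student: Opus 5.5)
We argue exactly as for Proposition~\ref{prop:estimators_x}, with the roles of the two groups exchanged. The plan has three steps: write each estimator as the true statistic plus a linear combination of noise variables, read off unbiasedness, and then obtain independence and variances from the noise structure in Definitions~\ref{def:noisy_group1} and~\ref{def:noisy_group2}.

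\emph{Decomposition and unbiasedness.} Substituting the definitions gives
\[
\hat{S}_y^{(1)} = S_{xy} + S_{(1-x)y} + Z_{21} + Z_{22} = S_y + Z_{21} + Z_{22},
\]
using the pointwise identity $xy + (1-x)y = y$. For the second estimator, Definition~\ref{def:noisy_n} gives $\tilde{n}_x = n + Z_{11} + Z_{12} + Z_{13}$, since $S_{x^2}+S_{x-x^2}+S_{1-x}=n$ by Definition~\ref{def:group1}. Hence
\[
\hat{S}_y^{(2)} = n - S_{1-y} + Z_{11}+Z_{12}+Z_{13} - Z_{23} = S_y + Z_{11}+Z_{12}+Z_{13} - Z_{23},
\]
because $n - S_{1-y} = S_y$. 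All $Z$'s are zero-mean Laplace variables, so both estimators are unbiased.

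\emph{Independence.} $\hat{S}_y^{(1)}$ is a function of $(Z_{21},Z_{22})$ only. $\hat{S}_y^{(2)}$ is a function of $(Z_{11},Z_{12},Z_{13},Z_{23})$ only. These index sets are disjoint, and all six noise variables are mutually independent: they are iid within each group, and Group 2 is independent of Group 1 by Definition~\ref{def:noisy_group2}. Functions of disjoint collections of mutually independent variables are independent, so the two estimators are independent.

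\emph{Variances.} With $\varepsilon_1=\varepsilon_2=\varepsilon/2$, each noise variable satisfies $\mathrm{Var}(\mathrm{Lap}(2/\varepsilon)) = 2(2/\varepsilon)^2 = 8/\varepsilon^2$. Variances of independent terms add, and the sign on $Z_{23}$ does not matter. Therefore $\mathrm{Var}(\hat{S}_y^{(1)}) = 2\cdot 8/\varepsilon^2 = 16/\varepsilon^2$ and $\mathrm{Var}(\hat{S}_y^{(2)}) = 4\cdot 8/\varepsilon^2 = 32/\varepsilon^2$.

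There is no real obstacle in this argument. The only point needing care is that $\tilde{n}_x$ contributes the entire Group 1 noise triple and no Group 2 noise; this is what keeps the index sets disjoint and gives the count of four noise terms in the second variance.
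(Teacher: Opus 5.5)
Your proposal is correct and follows essentially the same route as the paper's proof: you write each estimator as $S_y$ plus a zero-mean noise combination (using $xy+(1-x)y=y$ for the first and $n - S_{1-y} = S_y$ for the second), deduce independence from the disjoint noise sets $\{Z_{21},Z_{22}\}$ and $\{Z_{11},Z_{12},Z_{13},Z_{23}\}$, and count noise terms of variance $8/\varepsilon^2$ to get $16/\varepsilon^2$ and $32/\varepsilon^2$. Your explicit check of unbiasedness via zero-mean Laplace noise is a small addition that the paper's proof leaves implicit.
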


Applying Lemma~\ref{lem:optimal_weights}, the optimal weights are $2/3$ and $1/3$:

\begin{theorem}[Refined Estimator for $S_y$]
\label{thm:refined_y}
Define:
$$\hat{S}_y := \frac{2}{3}\bigl(\tilde{S}_{xy} + \tilde{S}_{(1-x)y}\bigr) + \frac{1}{3}\bigl(\tilde{n}_x - \tilde{S}_{1-y}\bigr).$$
Then $\hat{S}_y$ is unbiased with variance $\mathrm{Var}(\hat{S}_y) = \dfrac{32}{3\varepsilon^2}$.
\end{theorem}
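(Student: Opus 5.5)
The plan is to mirror the argument for Theorem~\ref{thm:refined_x}, with the roles of the two groups exchanged. I will invoke Proposition~\ref{prop:estimators_y} to obtain the two estimators and their variances. Then I will apply Lemma~\ref{lem:optimal_weights} to identify the weights and the variance of the combination.

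First, I would verify unbiasedness directly. Since $S_{xy}+S_{(1-x)y}=S_y$ pointwise ($xy+(1-x)y=y$), we have
\[
\tilde{S}_{xy}+\tilde{S}_{(1-x)y} = S_y + Z_{21}+Z_{22}.
\]
Using $S_{x^2}+S_{x-x^2}+S_{1-x}=n$ from Definition~\ref{def:group1}, we also have
\[
\tilde{n}_x - \tilde{S}_{1-y} = n - S_{1-y} + Z_{11}+Z_{12}+Z_{13}-Z_{23}.
\]
Since $n-S_{1-y}=S_y$, both estimators equal $S_y$ plus zero-mean noise. Any affine combination with weights summing to one is therefore unbiased.

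Next come the variances. With $\varepsilon_1=\varepsilon_2=\varepsilon/2$, each $Z_{jk}\sim\mathrm{Lap}(2/\varepsilon)$ has variance $2(2/\varepsilon)^2=8/\varepsilon^2$. The first estimator carries two independent noise terms, so its variance is $16/\varepsilon^2$. The second carries four, so its variance is $32/\varepsilon^2$; the sign on $Z_{23}$ does not affect the variance.

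Independence holds because the noise sets $\{Z_{21},Z_{22}\}$ and $\{Z_{11},Z_{12},Z_{13},Z_{23}\}$ are disjoint. The six noise variables are mutually independent by Definitions~\ref{def:noisy_group1} and~\ref{def:noisy_group2}. All of this is exactly the content of Proposition~\ref{prop:estimators_y}.

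Finally, Lemma~\ref{lem:optimal_weights} gives the optimal weights
\[
w_1^*=\frac{32}{16+32}=\frac{2}{3}, \qquad w_2^*=\frac{16}{48}=\frac{1}{3}.
\]
These match the definition of $\hat{S}_y$. The resulting variance is
\[
\left(\frac{\varepsilon^2}{16}+\frac{\varepsilon^2}{32}\right)^{-1}=\frac{32}{3\varepsilon^2}.
\]
As a sanity check, computing directly gives $\tfrac{4}{9}\cdot 16/\varepsilon^2+\tfrac{1}{9}\cdot 32/\varepsilon^2 = 96/(9\varepsilon^2)=32/(3\varepsilon^2)$.

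There is no real obstacle here. The only point needing care is confirming the disjointness of the noise supports, since independence is what makes the variance add without cross terms. That disjointness follows from $\tilde{n}_x$ using only Group~1 noise while $\tilde{S}_{1-y}$ uses only $Z_{23}$.
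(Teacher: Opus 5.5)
Your proposal is correct and follows the paper's proof essentially verbatim. You take the two independent unbiased estimators with variances $16/\varepsilon^2$ and $32/\varepsilon^2$ from Proposition~\ref{prop:estimators_y} and apply Lemma~\ref{lem:optimal_weights} to get weights $2/3$, $1/3$ and variance $32/(3\varepsilon^2)$; your direct re-derivation of unbiasedness and the check $\tfrac{4}{9}\cdot 16+\tfrac{1}{9}\cdot 32=\tfrac{32}{3}$ are harmless extras.
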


\subsection{Summary of Variance Reduction}

\begin{table}[h]
\centering
\caption{Comparison of variances: DP-RSS vs.\ DP-SS}
\label{tab:variance_summary}
\begin{tabular}{lccc}
\toprule
\textbf{Statistic} & \textbf{DP-SS} $\mathrm{Var}(\tilde{\cdot})$ & \textbf{DP-RSS} $\mathrm{Var}(\hat{\cdot})$ & \textbf{Improv} \\
\midrule
$n$        & $\dfrac{16}{\varepsilon^2}$ & $\dfrac{12}{\varepsilon^2}$  & $1.33\times$ \\[5pt]
$S_{x^2}$  & $\dfrac{32}{\varepsilon^2}$ & $\dfrac{20}{3\varepsilon^2}$ & $4.80\times$ \\[5pt]
$S_{xy}$   & $\dfrac{32}{\varepsilon^2}$ & $\dfrac{20}{3\varepsilon^2}$ & $4.80\times$ \\[5pt]
$S_x$      & $\dfrac{32}{\varepsilon^2}$ & $\dfrac{32}{3\varepsilon^2}$ & $3.00\times$ \\[5pt]
$S_y$      & $\dfrac{32}{\varepsilon^2}$ & $\dfrac{32}{3\varepsilon^2}$ & $3.00\times$ \\[5pt]
\bottomrule
\end{tabular}
\end{table}

\paragraph{Variance Reduction.}
Table~\ref{tab:variance_summary} compares the variances of the DP-SS noisy estimators $\tilde{\cdot}$ against the DP-RSS refined estimators $\hat{\cdot}$. See Appendix~\ref{app:proofs} for a detailed derivation of these variances and optimal weights. We remark that DP-RSS can be readily extended to private polynomial regression (see Section~\ref{sec:polynomial}), and that the application of our simplex transformation technique to other problem scenarios may be of independent interest.

The most dramatic improvements occur for $S_{x^2}$ and $S_{xy}$ --- these are the \emph{quadratic} statistics that are most critical for linear regression. The variance of the estimated slope $\hat{\alpha}$ depends heavily on $\mathrm{Var}(S_{x^2})$ and $\mathrm{Var}(S_{xy})$, so nearly $5\times$ reduction in these variances translates to significantly more accurate regression coefficients.

\section{Experimental Results and Discussion}

\subsection{Experimental Setup}

\subsubsection{Data Generation}
We generated two synthetic datasets to evaluate our method. In both setups, the independent variable $x_i$ was sampled from a Uniform distribution $x_i \in [0, 1]$, and the dependent variable $y_i$ was generated according to the linear model $y_i = \alpha x_i + \beta + e_i$, where $e_i \sim \mathcal{N}(0, \sigma^2)$ is Gaussian noise. The resulting $y_i$ values were clipped to $[0, 1]$ to satisfy the boundedness constraint required by the privacy mechanisms. The specific parameters for each setup are summarized in Table~\ref{tab:setups}.

\begin{table}[h]
\centering
\caption{Parameters for synthetic data generation.}
\label{tab:setups}
\begin{tabular}{lcccc}
\toprule
& $n$ & $\alpha$ & $\beta$ & $\sigma$ \\
\midrule
Setup 1 & 5{,}000  & $-0.7$ & $0.8$ & $0.05$ \\
Setup 2 & 10{,}000 & $0.5$  & $0.2$ & $0.1$  \\
\bottomrule
\end{tabular}
\end{table}

\begin{figure}[t]
\centering
\includegraphics[width=0.49\textwidth]{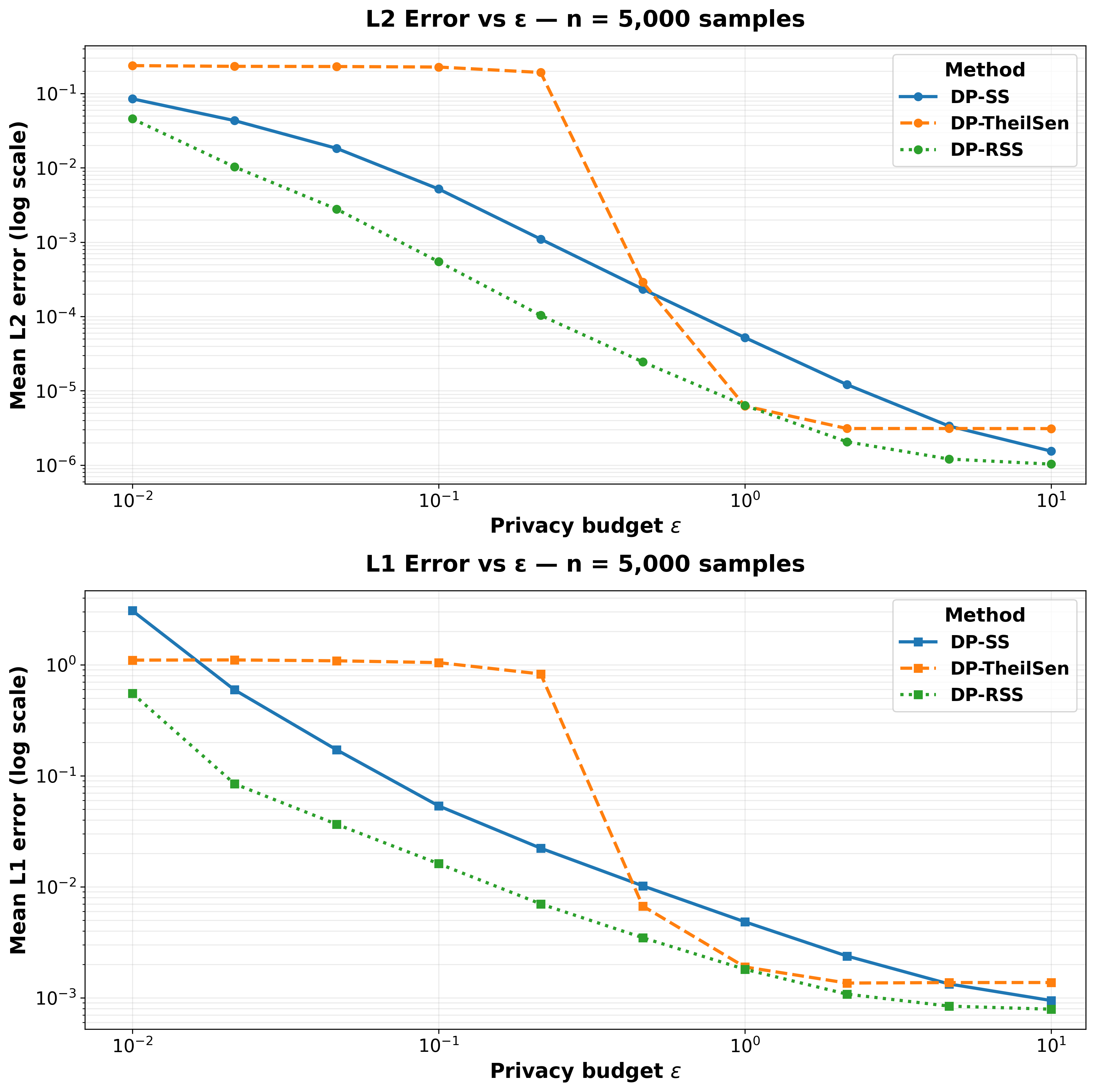}
\hfill
\includegraphics[width=0.49\textwidth]{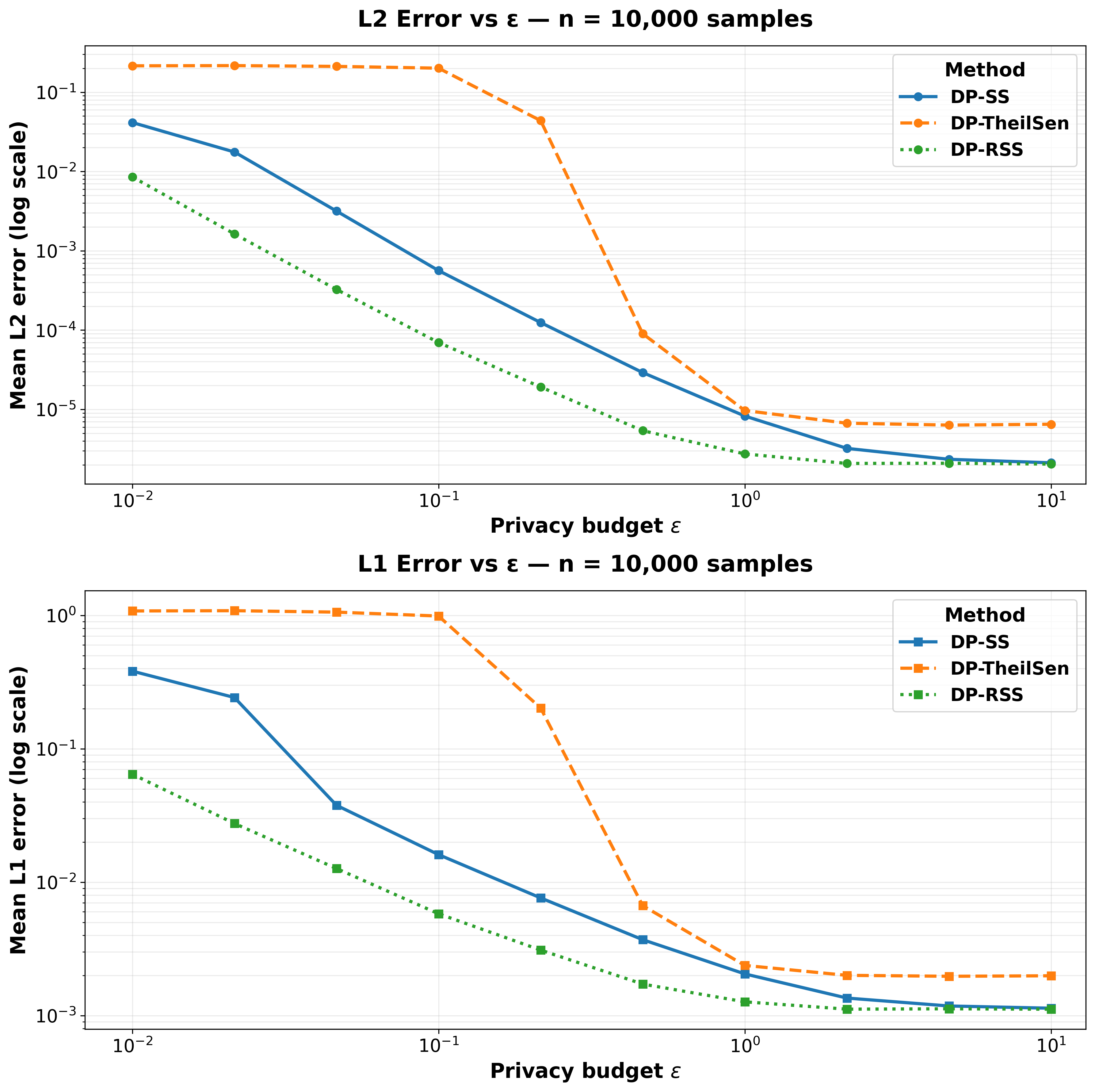}
\caption{Comparison of L1 and L2 errors versus privacy budget $\varepsilon$ for different DP regression methods. The sub-plots on the left (resp.\ right) correspond to Setup~$1$ (resp.\ Setup~$2$).}
\label{fig:errors_5k_wide}
\end{figure}

\subsubsection{Evaluation Metrics}
The evaluation metrics used were the $L_1$ Error~\eqref{eq:mae}, and the $L_2$ Error~\eqref{eq:mse}, of the estimated regression lines relative to the true regression line, where the expectation in~\eqref{eq:mae} and~\eqref{eq:mse} was taken over a Uniform distribution of $x \in [0, 1]$. For $L_1$ error, we approximated the integral using numerical integration over $n_{\text{points}} = 1{,}000$ equally spaced points:
\begin{equation}
L_1 \approx \frac{1}{n_{\text{points}}} \sum_{i=1}^{n_{\text{points}}} |(\alpha x_i + \beta) - (\hat{\alpha} x_i + \hat{\beta})|
\label{eq:mae1}
\end{equation}
where $x_i = i/n_{\text{points}}$. For $L_2$ error, we computed the exact integral analytically. Let $A := \alpha - \hat{\alpha}$, and $B := \beta - \hat{\beta}$. Given the difference function $h(x) = Ax + B$, the mean squared error is:
\begin{equation}
L_2 = \int_0^1 (Ax + B)^2 \, dx = \frac{A^2}{3} + AB + B^2.
\label{eq:mse1}
\end{equation}
Both metrics measure the integrated discrepancy between the true and estimated regression functions over the entire input domain, rather than only at the training points. The experiments were repeated over 1{,}000 iterations for each $\varepsilon$ to ensure statistical significance.

\subsection{Results}

We evaluate the performance of our proposed method, DP-RSS, by comparing it against two baselines: the standard Differentially Private Sufficient Statistics (DP-SS) \citep{alabi2022differentially} and the DP-Theil-Sen estimator \citep{alabi2022differentially,dwork2009RobustStatistics}. The evaluation metrics are the $L_1$ and $L_2$ estimation errors of the regression lines given by \eqref{eq:mae1} and \eqref{eq:mse1}, respectively. As illustrated in Figure~\ref{fig:errors_5k_wide}, both $L_1$ and $L_2$ errors decrease as $\varepsilon$ increases, with the most significant reductions observed in the low-$\varepsilon$ regime. Our DP-RSS method consistently outperforms the baseline DP-SS technique in both the setups, corroborating our analytical results in Table~\ref{tab:variance_summary} on variance reduction with DP-RSS. Further, for $\varepsilon < 1.0$, Figure~\ref{fig:errors_5k_wide} demonstrates that DP-RSS gives many-fold reduction in error over DP-Theil-Sen.

\section{Extension to Polynomial Regression}
\label{sec:polynomial}

Our simplex transformation approach generalizes naturally to polynomial regression of degree $d$. For polynomial $f(X) = \sum_{i=0}^d a_i X^i$, we construct two groups of statistics with unit $\ell_1$-sensitivity:
\begin{itemize}
  \item \textbf{Group 1:} $[S_{x^{2d}}, S_{x^{2d-1}-x^{2d}}, \ldots, S_{x-x^2}, S_{1-x}]$, representing $x$-\emph{statistics} of dimension $2d+1$
  \item \textbf{Group 2:} $[S_{x^d y}, S_{(x^{d-1}-x^d)y}, \ldots, S_{(1-x)y}, S_{1-y}]$, representing \emph{joint statistics} of dimension $d+2$
\end{itemize}
Note that the vectors in the above groups have sensitivity $1$ for the add/remove DP model. Using privacy budgets $\varepsilon_1, \varepsilon_2$ for Groups 1 and 2, respectively, we obtain noisy statistics:
\begin{align*}
\tilde{S}_{x^{2d}} &= S_{x^{2d}} + \mathrm{Lap}(1/\varepsilon_1), \\
\tilde{S}_{x^{j-1}-x^j} &= S_{x^{j-1}-x^j} + \mathrm{Lap}(1/\varepsilon_1), \quad j = 2d, \ldots, 1, \\
\tilde{S}_{x^d y} &= S_{x^d y} + \mathrm{Lap}(1/\varepsilon_2), \qquad \tilde{S}_{1-y} = S_{1-y} + \mathrm{Lap}(1/\varepsilon_2), \\
\tilde{S}_{(x^{l-1} - x^l) y} &= S_{(x^{l-1} - x^l) y} + \mathrm{Lap}(1/\varepsilon_2), \quad l = d, \ldots, 1.
\end{align*}

The above noisy statistics are $(\varepsilon_1 + \varepsilon_2)$-DP, and post-processing yields refined private sufficient statistics: $\hat{S}_{x^j}$ for $1 \le j \le 2d$, and $\hat{S}_{x^l y}$ for $0 \le l \le d$, via inverse-variance weighting.
\begin{align*}
\tilde{n}_x &:= \tilde{S}_{x^{2d}} + \tilde{S}_{x^{2d-1}-x^{2d}} + \cdots + \tilde{S}_{x-x^{2}} + \tilde{S}_{1-x}, \\
\tilde{n}_y &:= \tilde{S}_{x^{d}y} + \tilde{S}_{(x^{d-1}-x^{d})y} + \cdots + \tilde{S}_{(1-x)y} + \tilde{S}_{1-y}, \\
\hat{n} &= w_1\tilde{n}_x + w_2\tilde{n}_y, \\
\hat{S}_{x^{l}} &= w_1\!\left[ \tilde{S}_{x^{2d}} + \sum_{j=l+1}^{2d} \tilde{S}_{x^{j-1}-x^{j}} \right] + w_2\!\left[ \tilde{n}_y - \sum_{j=1}^{l} \tilde{S}_{x^{j-1}-x^{j}} \right], \quad l = 1, \ldots, 2d, \\
\hat{S}_{x^{l}y} &= w_1\!\left[ \tilde{S}_{x^{d}y} + \sum_{j=l+1}^{d} \tilde{S}_{(x^{j-1}-x^{j})y} \right] + w_2\!\left[ \tilde{n}_x - \sum_{j=1}^{l} \tilde{S}_{(x^{j-1}-x^{j})y} - \tilde{S}_{1-y} \right], \quad l = 0, \ldots, d.
\end{align*}

A private estimate of the polynomial model $\widehat{\mathbf{a}} = [\hat{a}_d, \ldots, \hat{a}_0]^\top$ is recovered via the normal equations:
\[
\widehat{\mathbf{a}} = \widehat{\mathbf{X}}^{-1} \widehat{\mathbf{y}}, \quad \text{with} \quad \hat{X}_{i,j} = \hat{S}_{x^{2d-i-j}},\quad \hat{y}_i = \hat{S}_{x^{d-i}y}, \quad 0 \le i, j \le d.
\]

\section{Conclusion}

In this work, we introduced DP-RSS, a novel mechanism for differentially private linear regression in the add/remove DP model. By exploiting the algebraic structure of bounded data and applying novel simplex transformations, we derived multiple independent estimators for key sufficient statistics without consuming additional privacy budget. Our theoretical analysis and empirical evaluations demonstrate that DP-RSS consistently outperforms the standard DP-SS estimator. Further, compared to the DP-Theil-Sen estimator, our experimental results highlight a significant reduction in error with our DP-RSS mechanism in the high-privacy (low $\varepsilon$) regime. Finally, we remark that our multidimensional simplex transformation can be readily applied to the adaptive framework presented in \citet[][Alg.~2]{wang2018PrivateLinearRegression}, and can further improve the performance of the BinAgg algorithm \citep{lin2026LinearRegression} via refined estimates of private sums.

\paragraph{Future Work.}
Several promising directions remain open: (1) \emph{Optimal privacy allocation for polynomial regression}: While our approach naturally extends to degree-$d$ polynomials, determining the optimal split of privacy budget $\varepsilon = \varepsilon_1 + \varepsilon_2$ among the $O(d)$ statistics to minimize regression error is an important open problem. (2) \emph{Multivariate regression}: Extending our simplex transformation techniques to multiple input variables while maintaining computational efficiency. (3) \emph{Other statistical tasks}: Exploring whether similar algebraic constraints can be leveraged for variance reduction in other estimation problems under add/remove differential privacy, such as generalized linear models or robust regression.

\bibliography{sample-base}

\appendix

\section{DP-RSS for General Bounded Data}
\label{app:general_bounds}

Algorithm~\ref{alg:dp_rss_compact} and the theoretical results in Section~4 are stated for data satisfying $x_i, y_i \in [0,1]$. In practice, however, the raw data may only be known to lie in a general rectangle $[x_{\min}, x_{\max}] \times [y_{\min}, y_{\max}]$. We now describe how DP-RSS adapts to this setting via a \emph{linear normalisation}, which reduces the general case exactly to the $[0,1]^2$ case analysed in the main paper, and then show how to recover regression parameters on the original scale.

Given a dataset $\mathcal{D} = \{(x_i, y_i)\}_{i=1}^n$ with $x_{\min} \leq x_i \leq x_{\max}$ and $y_{\min} \leq y_i \leq y_{\max}$, define the normalised variables
\[
x_i' = \frac{x_i - x_{\min}}{x_{\max} - x_{\min}}, \qquad y_i' = \frac{y_i - y_{\min}}{y_{\max} - y_{\min}},
\]
so that $x_i', y_i' \in [0,1]$ for all $i$. Let $\Delta_x := x_{\max} - x_{\min} > 0$ and $\Delta_y := y_{\max} - y_{\min} > 0$. The normalised dataset $\mathcal{D}' = \{(x_i', y_i')\}_{i=1}^n$ satisfies the boundedness assumption of Algorithm~\ref{alg:dp_rss_compact}, and DP-RSS can be applied to $\mathcal{D}'$ directly to obtain private estimates $\hat{\alpha}'$ (slope) and $\hat{\beta}'$ (intercept) of the linear model $y_i' = \alpha' x_i' + \beta' + e_i$, while satisfying $\varepsilon$-differential privacy with the same privacy guarantees established in Theorem~\ref{thm:privacy_composition}.

The estimated regression line on the original scale is obtained by inverting the normalisation above. Substituting $y_i' = (y_i - y_{\min})/\Delta_y$ and $x_i' = (x_i - x_{\min})/\Delta_x$ into the normalised model and rearranging gives
$y_i = y_{\min} + \Delta_y \!\left(\alpha' \cdot \frac{x_i - x_{\min}}{\Delta_x} + \beta'\right).$
Hence, the private estimates on the original scale are:
\begin{align*}
\hat{\alpha} &= \frac{\Delta_y}{\Delta_x}\,\hat{\alpha}', \\
\hat{\beta}  &= y_{\min} + \Delta_y\!\left(\hat{\beta}' - \frac{x_{\min}}{\Delta_x}\,\hat{\alpha}'\right).
\end{align*}

\section{Proofs of Main Results}
\label{app:proofs}

\subsection{Sensitivity and Privacy Proofs}

\begin{proof}[Proof of Lemma~\ref{lem:sensitivity_group1}]
Consider two neighboring datasets $D$ and $D'$ under the add/remove model (unbounded differential privacy). Without loss of generality, assume that $D'$ contains one additional record compared to $D$:
$D' = D \cup \{(x^*, y^*)\}$
where $(x^*, y^*) \in [0,1] \times [0,1]$ is the added record.

We compute each component of $f(D') - f(D)$ separately.
\begin{align*}
S_{x^2}(D') &= \sum_{(x_i, y_i) \in D'} x_i^2 = \sum_{(x_i, y_i) \in D} x_i^2 + (x^*)^2 = S_{x^2}(D) + (x^*)^2
\end{align*}
\begin{align*}
S_{x-x^2}(D') &= \sum_{(x_i, y_i) \in D'} (x_i - x_i^2) = \sum_{(x_i, y_i) \in D} (x_i - x_i^2) + (x^* - (x^*)^2) \\
&= S_{x-x^2}(D) + (x^* - (x^*)^2)
\end{align*}
\begin{align*}
S_{1-x}(D') &= \sum_{(x_i, y_i) \in D'} (1 - x_i) = \sum_{(x_i, y_i) \in D} (1 - x_i) + (1 - x^*) \\
&= S_{1-x}(D) + (1 - x^*)
\end{align*}

Therefore, the difference vector is:
$f(D') - f(D) = \bigl( (x^*)^2,\; x^* - (x^*)^2,\; 1 - x^* \bigr).$

Since $x^* \in [0,1]$, each component is non-negative.
\begin{align*}
\|f(D') - f(D)\|_1 &= |(x^*)^2| + |x^* - (x^*)^2| + |1 - x^*| \\
&= (x^*)^2 + (x^* - (x^*)^2) + (1 - x^*) = 1.
\end{align*}

Since the $\ell_1$-norm equals 1 for any choice of $x^* \in [0,1]$, the global $\ell_1$-sensitivity is:
$\Delta_1(f) = \max_{D \sim D'} \|f(D) - f(D')\|_1 = 1.$
\end{proof}

\begin{proof}[Proof of Lemma~\ref{lem:sensitivity_group2}]
Consider neighboring datasets $D$ and $D'$ where $D' = D \cup \{(x^*, y^*)\}$ with $(x^*, y^*) \in [0,1]^2$.

We compute each component of $g(D') - g(D)$ separately.
\begin{align*}
S_{xy}(D') &= \sum_{(x_i, y_i) \in D'} x_i y_i = S_{xy}(D) + x^* y^*
\end{align*}
\begin{align*}
S_{(1-x)y}(D') &= \sum_{(x_i, y_i) \in D'} (1-x_i) y_i = S_{(1-x)y}(D) + (1-x^*) y^*
\end{align*}
\begin{align*}
S_{1-y}(D') &= \sum_{(x_i, y_i) \in D'} (1 - y_i) = S_{1-y}(D) + (1 - y^*)
\end{align*}

Therefore, the difference vector is:
$g(D') - g(D) = \bigl( x^* y^*,\; (1-x^*)y^*,\; 1-y^* \bigr).$

Since $x^*, y^* \in [0,1]$, all three components are non-negative.
\begin{align*}
\|g(D') - g(D)\|_1 &= x^* y^* + (1-x^*)y^* + (1-y^*) = 1.
\end{align*}

Since the $\ell_1$-norm equals 1 for any choice of $(x^*, y^*) \in [0,1]^2$, the global $\ell_1$-sensitivity is:
$\Delta_1(g) = \max_{D \sim D'} \|g(D) - g(D')\|_1 = 1.$
\end{proof}

\begin{proof}[Proof of Corollary~\ref{cor:privacy_group1}]
We apply the Laplace mechanism theorem for vector-valued queries.

\textbf{Statement of the Laplace mechanism.} Let $f: \mathcal{D} \to \mathbb{R}^k$ be a vector-valued function with global $\ell_1$-sensitivity $\Delta_1(f)$. The mechanism that releases
$$\tilde{f}(D) = f(D) + (Z_1, Z_2, \ldots, Z_k)$$
where $Z_1, \ldots, Z_k \stackrel{\text{iid}}{\sim} \mathrm{Lap}(\Delta_1(f)/\varepsilon)$, satisfies $\varepsilon$-differential privacy.

By Lemma~\ref{lem:sensitivity_group1}, the function $f(D) = (S_{x^2}, S_{x-x^2}, S_{1-x})$ has $\ell_1$-sensitivity $\Delta_1(f) = 1$.

We release:
$(\tilde{S}_{x^2}, \tilde{S}_{x-x^2}, \tilde{S}_{1-x}) = (S_{x^2} + Z_{11}, S_{x-x^2} + Z_{12}, S_{1-x} + Z_{13})$
where $Z_{11}, Z_{12}, Z_{13} \stackrel{\text{iid}}{\sim} \mathrm{Lap}(1/\varepsilon_1)$.

Since the noise scale is $b = \Delta_1(f)/\varepsilon_1 = 1/\varepsilon_1$, the Laplace mechanism theorem guarantees $\varepsilon_1$-differential privacy.

\textbf{Formal verification.} For any neighboring datasets $D, D'$ and any measurable set $S \subseteq \mathbb{R}^3$:
\begin{align*}
\frac{\Pr[(\tilde{S}_{x^2}, \tilde{S}_{x-x^2}, \tilde{S}_{1-x}) \in S \mid D]}{\Pr[(\tilde{S}_{x^2}, \tilde{S}_{x-x^2}, \tilde{S}_{1-x}) \in S \mid D']} &\leq \exp\!\left(\varepsilon_1 \cdot \|f(D) - f(D')\|_1\right) \\
&\leq \exp\!\left(\varepsilon_1 \cdot \Delta_1(f)\right) = e^{\varepsilon_1}.
\end{align*}
This is the definition of $\varepsilon_1$-differential privacy (with $\delta = 0$).
\end{proof}

\begin{proof}[Proof of Corollary~\ref{cor:privacy_group2}]
The proof is completely analogous to Corollary~\ref{cor:privacy_group1}.

By Lemma~\ref{lem:sensitivity_group2}, the function $g(D) = (S_{xy}, S_{(1-x)y}, S_{1-y})$ has $\ell_1$-sensitivity $\Delta_1(g) = 1$.

We release:
$(\tilde{S}_{xy}, \tilde{S}_{(1-x)y}, \tilde{S}_{1-y}) = (S_{xy} + Z_{21}, S_{(1-x)y} + Z_{22}, S_{1-y} + Z_{23})$
where $Z_{21}, Z_{22}, Z_{23} \stackrel{\text{iid}}{\sim} \mathrm{Lap}(1/\varepsilon_2)$.

Since the noise scale is $b = \Delta_1(g)/\varepsilon_2 = 1/\varepsilon_2$, the Laplace mechanism theorem guarantees $\varepsilon_2$-differential privacy.
\end{proof}

\begin{proof}[Proof of Theorem~\ref{thm:privacy_composition}]
We apply the basic composition theorem for pure differential privacy.

\textbf{Statement of the composition theorem.} If $\mathcal{M}_1: \mathcal{D} \to \mathcal{R}_1$ satisfies $\varepsilon_1$-DP and $\mathcal{M}_2: \mathcal{D} \to \mathcal{R}_2$ satisfies $\varepsilon_2$-DP, then the combined mechanism $\mathcal{M}(D) = (\mathcal{M}_1(D), \mathcal{M}_2(D))$ satisfies $(\varepsilon_1 + \varepsilon_2)$-DP.

Define:
\begin{itemize}
\item $\mathcal{M}_1(D) = (\tilde{S}_{x^2}, \tilde{S}_{x-x^2}, \tilde{S}_{1-x})$, which satisfies $\varepsilon_1$-DP by Corollary~\ref{cor:privacy_group1}.
\item $\mathcal{M}_2(D) = (\tilde{S}_{xy}, \tilde{S}_{(1-x)y}, \tilde{S}_{1-y})$, which satisfies $\varepsilon_2$-DP by Corollary~\ref{cor:privacy_group2}.
\end{itemize}

The joint mechanism is:
$$\mathcal{M}(D) = (\mathcal{M}_1(D), \mathcal{M}_2(D)) = \bigl( \tilde{S}_{x^2}, \tilde{S}_{x-x^2}, \tilde{S}_{1-x}, \tilde{S}_{xy}, \tilde{S}_{(1-x)y}, \tilde{S}_{1-y} \bigr).$$

By the composition theorem, $\mathcal{M}(D)$ satisfies $(\varepsilon_1 + \varepsilon_2)$-differential privacy. With $\varepsilon_1 = \varepsilon_2 = \varepsilon/2$, the total privacy budget is $\varepsilon_1 + \varepsilon_2 = \varepsilon$.
\end{proof}

\subsection{Proofs for Estimator Properties}

\begin{proof}[Proof of Lemma~\ref{lem:optimal_weights}]
We seek to minimize the variance of a weighted combination of two independent unbiased estimators.

Let $\hat{\theta}_1$ and $\hat{\theta}_2$ be independent unbiased estimators of $\theta$ with variances $\sigma_1^2$ and $\sigma_2^2$, respectively. Consider the weighted average $\hat{\theta} = w_1 \hat{\theta}_1 + w_2 \hat{\theta}_2$ subject to the constraint $w_1 + w_2 = 1$.

By independence:
$\mathrm{Var}(\hat{\theta}) = w_1^2 \sigma_1^2 + w_2^2 \sigma_2^2.$

Using the constraint $w_2 = 1 - w_1$:
\begin{align*}
V(w_1) &= w_1^2 \sigma_1^2 + (1-w_1)^2 \sigma_2^2 = w_1^2 (\sigma_1^2 + \sigma_2^2) - 2w_1\sigma_2^2 + \sigma_2^2.
\end{align*}

Taking the derivative and setting to zero:
$\frac{dV}{dw_1} = 2w_1(\sigma_1^2 + \sigma_2^2) - 2\sigma_2^2 = 0 \implies w_1^* = \frac{\sigma_2^2}{\sigma_1^2 + \sigma_2^2}.$

Therefore $w_2^* = \frac{\sigma_1^2}{\sigma_1^2 + \sigma_2^2}$. Since $\frac{d^2V}{dw_1^2} = 2(\sigma_1^2 + \sigma_2^2) > 0$, this is a minimum.

Substituting the optimal weights:
$$\mathrm{Var}(\hat{\theta}^*) = \left(\frac{\sigma_2^2}{\sigma_1^2 + \sigma_2^2}\right)^2 \sigma_1^2 + \left(\frac{\sigma_1^2}{\sigma_1^2 + \sigma_2^2}\right)^2 \sigma_2^2 = \frac{\sigma_1^2 \sigma_2^2}{\sigma_1^2 + \sigma_2^2}.$$
\end{proof}

\begin{proof}[Proof of Proposition~\ref{prop:estimators_x2}]
We prove independence and compute variances for both estimators.

\textbf{Part 1: First estimator $\hat{S}_{x^2}^{(1)} = \tilde{S}_{x^2}$.}

\textit{Variance:} For $Z \sim \mathrm{Lap}(b)$, $\mathrm{Var}(Z) = 2b^2$. With $b = 1/\varepsilon_1 = 2/\varepsilon$:
$$\mathrm{Var}\!\left(\hat{S}_{x^2}^{(1)}\right) = \mathrm{Var}(Z_{11}) = 2\left(\frac{2}{\varepsilon}\right)^2 = \frac{8}{\varepsilon^2}.$$

\textbf{Part 2: Second estimator $\hat{S}_{x^2}^{(2)} = \tilde{n}_y - (\tilde{S}_{x-x^2} + \tilde{S}_{1-x})$.}

\textit{Expanding the definition:}
\begin{align*}
\hat{S}_{x^2}^{(2)} &= (S_{xy} + Z_{21}) + (S_{(1-x)y} + Z_{22}) + (S_{1-y} + Z_{23}) \\
&\quad - (S_{x-x^2} + Z_{12}) - (S_{1-x} + Z_{13}) \\
&= \underbrace{S_{xy} + S_{(1-x)y} + S_{1-y} - S_{x-x^2} - S_{1-x}}_{\text{deterministic part}} + \underbrace{Z_{21} + Z_{22} + Z_{23} - Z_{12} - Z_{13}}_{\text{noise part}}.
\end{align*}

\textit{Simplifying the deterministic part:} Using $S_{xy} + S_{(1-x)y} + S_{1-y} = n$ and $S_{x^2} + S_{x-x^2} + S_{1-x} = n$:
$$n - S_{x-x^2} - S_{1-x} = S_{x^2} \implies \hat{S}_{x^2}^{(2)} = S_{x^2} + Z_{21} + Z_{22} + Z_{23} - Z_{12} - Z_{13}.$$

\textit{Variance:} All noise variables have variance $8/\varepsilon^2$:
$$\mathrm{Var}(\hat{S}_{x^2}^{(2)}) = 5 \cdot \frac{8}{\varepsilon^2} = \frac{40}{\varepsilon^2}.$$

\textbf{Part 3: Independence.} $\hat{S}_{x^2}^{(1)}$ depends on $\{Z_{11}\}$ and $\hat{S}_{x^2}^{(2)}$ depends on $\{Z_{12}, Z_{13}, Z_{21}, Z_{22}, Z_{23}\}$. These are disjoint sets of independent noise variables, so the estimators are independent.
\end{proof}

\begin{proof}[Proof of Theorem~\ref{thm:refined_x2}]
We apply Lemma~\ref{lem:optimal_weights} with $\sigma_1^2 = 8/\varepsilon^2$ and $\sigma_2^2 = 40/\varepsilon^2$.

Computing optimal weights: By Lemma~\ref{lem:optimal_weights}:
$$w_1 = \frac{\sigma_2^2}{\sigma_1^2 + \sigma_2^2} = \frac{40/\varepsilon^2}{8/\varepsilon^2 + 40/\varepsilon^2} = \frac{5}{6}, \qquad w_2 = 1 - w_1 = \frac{1}{6}.$$

Refined estimator:
$$\hat{S}_{x^2} = w_1 \hat{S}_{x^2}^{(1)} + w_2 \hat{S}_{x^2}^{(2)} = \frac{5}{6}\,\tilde{S}_{x^2} + \frac{1}{6}\bigl[\tilde{n}_y - (\tilde{S}_{x-x^2} + \tilde{S}_{1-x})\bigr].$$

Computing the variance: By Lemma~\ref{lem:optimal_weights}, the minimum variance is:
$$\mathrm{Var}(\hat{S}_{x^2}) = \frac{\sigma_1^2 \sigma_2^2}{\sigma_1^2 + \sigma_2^2} = \frac{(8/\varepsilon^2)(40/\varepsilon^2)}{8/\varepsilon^2 + 40/\varepsilon^2} = \frac{20}{3\varepsilon^2}.$$
\end{proof}

\begin{proof}[Proof of Proposition~\ref{prop:estimators_xy}]
We prove independence and compute variances for both estimators of $S_{xy}$.

\textbf{Part 1: First estimator $\hat{S}_{xy}^{(1)} = \tilde{S}_{xy}$.}
$$\mathrm{Var}(\hat{S}_{xy}^{(1)}) = \mathrm{Var}(Z_{21}) = 2 \cdot \left(\frac{2}{\varepsilon}\right)^2 = \frac{8}{\varepsilon^2}.$$

\textbf{Part 2: Second estimator $\hat{S}_{xy}^{(2)} = \tilde{n}_x - (\tilde{S}_{(1-x)y} + \tilde{S}_{1-y})$.}

\textit{Expanding the definition:}
\begin{align*}
\hat{S}_{xy}^{(2)} &= (S_{x^2} + Z_{11}) + (S_{x-x^2} + Z_{12}) + (S_{1-x} + Z_{13}) \\
&\quad - (S_{(1-x)y} + Z_{22}) - (S_{1-y} + Z_{23}) \\
&= \underbrace{S_{x^2} + S_{x-x^2} + S_{1-x} - S_{(1-x)y} - S_{1-y}}_{\text{deterministic part}} + \underbrace{Z_{11} + Z_{12} + Z_{13} - Z_{22} - Z_{23}}_{\text{noise part}}.
\end{align*}

\textit{Simplifying the deterministic part:} Using $S_{x^2} + S_{x-x^2} + S_{1-x} = n$ and $S_{xy} + S_{(1-x)y} + S_{1-y} = n$:
$$n - S_{(1-x)y} - S_{1-y} = S_{xy} \implies \hat{S}_{xy}^{(2)} = S_{xy} + Z_{11} + Z_{12} + Z_{13} - Z_{22} - Z_{23}.$$

\textit{Variance:} $\mathrm{Var}(\hat{S}_{xy}^{(2)}) = 5 \cdot \frac{8}{\varepsilon^2} = \frac{40}{\varepsilon^2}.$

\textbf{Part 3: Independence.} $\hat{S}_{xy}^{(1)}$ depends on $\{Z_{21}\}$ and $\hat{S}_{xy}^{(2)}$ depends on $\{Z_{11}, Z_{12}, Z_{13}, Z_{22}, Z_{23}\}$. These are disjoint sets of independent noise variables, so the estimators are independent.
\end{proof}

\begin{proof}[Proof of Theorem~\ref{thm:refined_xy}]
We apply Lemma~\ref{lem:optimal_weights} with $\sigma_1^2 = 8/\varepsilon^2$ and $\sigma_2^2 = 40/\varepsilon^2$.

Computing optimal weights:
$$w_1 = \frac{40/\varepsilon^2}{8/\varepsilon^2 + 40/\varepsilon^2} = \frac{5}{6}, \qquad w_2 = \frac{1}{6}.$$

Refined estimator:
$$\hat{S}_{xy} = \frac{5}{6}\,\tilde{S}_{xy} + \frac{1}{6}\bigl[\tilde{n}_x - (\tilde{S}_{(1-x)y} + \tilde{S}_{1-y})\bigr].$$

Computing the variance:
$$\mathrm{Var}(\hat{S}_{xy}) = \frac{(8/\varepsilon^2)(40/\varepsilon^2)}{8/\varepsilon^2 + 40/\varepsilon^2} = \frac{20}{3\varepsilon^2}.$$
\end{proof}

\begin{proof}[Proof of Proposition~\ref{prop:estimators_x}]
We prove independence and compute variances for both estimators of $S_x = \sum_{i=1}^n x_i$.

\textbf{Part 1: First estimator $\hat{S}_x^{(1)} = \tilde{S}_{x^2} + \tilde{S}_{x-x^2}$.}

\textit{Relationship to $S_x$:} $S_{x^2} + S_{x-x^2} = \sum_{i=1}^n x_i^2 + \sum_{i=1}^n (x_i - x_i^2) = \sum_{i=1}^n x_i = S_x$.

Thus $\hat{S}_x^{(1)} = S_x + Z_{11} + Z_{12}$.

\textit{Variance:}
$$\mathrm{Var}(\hat{S}_x^{(1)}) = \mathrm{Var}(Z_{11}) + \mathrm{Var}(Z_{12}) = \frac{8}{\varepsilon^2} + \frac{8}{\varepsilon^2} = \frac{16}{\varepsilon^2}.$$

\textbf{Part 2: Second estimator $\hat{S}_x^{(2)} = \tilde{n}_y - \tilde{S}_{1-x}$.}

\textit{Expanding the definition:}
\begin{align*}
\hat{S}_x^{(2)} &= (S_{xy} + Z_{21}) + (S_{(1-x)y} + Z_{22}) + (S_{1-y} + Z_{23}) - (S_{1-x} + Z_{13}) \\
&= \underbrace{S_{xy} + S_{(1-x)y} + S_{1-y} - S_{1-x}}_{\text{deterministic part}} + \underbrace{Z_{21} + Z_{22} + Z_{23} - Z_{13}}_{\text{noise part}}.
\end{align*}

\textit{Simplifying the deterministic part:} Using $S_{xy} + S_{(1-x)y} + S_{1-y} = n$ and $S_x = n - S_{1-x}$:
$$n - S_{1-x} = S_x \implies \hat{S}_x^{(2)} = S_x + Z_{21} + Z_{22} + Z_{23} - Z_{13}.$$

\textit{Variance:}
$$\mathrm{Var}(\hat{S}_x^{(2)}) = 4 \cdot \frac{8}{\varepsilon^2} = \frac{32}{\varepsilon^2}.$$

\textbf{Part 3: Independence.} $\hat{S}_x^{(1)}$ depends on $\{Z_{11}, Z_{12}\}$ and $\hat{S}_x^{(2)}$ depends on $\{Z_{13}, Z_{21}, Z_{22}, Z_{23}\}$. These are disjoint sets of independent noise variables, so the estimators are independent.
\end{proof}

\begin{proof}[Proof of Theorem~\ref{thm:refined_x}]
We apply Lemma~\ref{lem:optimal_weights} with $\sigma_1^2 = 16/\varepsilon^2$ and $\sigma_2^2 = 32/\varepsilon^2$.

Computing optimal weights:
$$w_1' = \frac{32/\varepsilon^2}{16/\varepsilon^2 + 32/\varepsilon^2} = \frac{2}{3}, \qquad w_2' = \frac{1}{3}.$$

Refined estimator:
$$\hat{S}_x = \frac{2}{3}\bigl(\tilde{S}_{x^2} + \tilde{S}_{x-x^2}\bigr) + \frac{1}{3}\bigl(\tilde{n}_y - \tilde{S}_{1-x}\bigr).$$

Computing the variance:
$$\mathrm{Var}(\hat{S}_x) = \frac{(16/\varepsilon^2)(32/\varepsilon^2)}{16/\varepsilon^2 + 32/\varepsilon^2} = \frac{32}{3\varepsilon^2}.$$
\end{proof}

\begin{proof}[Proof of Proposition~\ref{prop:estimators_y}]
We prove independence and compute variances for both estimators of $S_y = \sum_{i=1}^n y_i$.

\textbf{Part 1: First estimator $\hat{S}_y^{(1)} = \tilde{S}_{xy} + \tilde{S}_{(1-x)y}$.}

\textit{Relationship to $S_y$:} $S_{xy} + S_{(1-x)y} = \sum_{i=1}^n x_i y_i + \sum_{i=1}^n (1-x_i) y_i = \sum_{i=1}^n y_i = S_y$.

Thus $\hat{S}_y^{(1)} = S_y + Z_{21} + Z_{22}$.

\textit{Variance:}
$$\mathrm{Var}(\hat{S}_y^{(1)}) = \mathrm{Var}(Z_{21}) + \mathrm{Var}(Z_{22}) = \frac{8}{\varepsilon^2} + \frac{8}{\varepsilon^2} = \frac{16}{\varepsilon^2}.$$

\textbf{Part 2: Second estimator $\hat{S}_y^{(2)} = \tilde{n}_x - \tilde{S}_{1-y}$.}

\textit{Expanding the definition:}
\begin{align*}
\hat{S}_y^{(2)} &= (S_{x^2} + Z_{11}) + (S_{x-x^2} + Z_{12}) + (S_{1-x} + Z_{13}) - (S_{1-y} + Z_{23}) \\
&= \underbrace{S_{x^2} + S_{x-x^2} + S_{1-x} - S_{1-y}}_{\text{deterministic part}} + \underbrace{Z_{11} + Z_{12} + Z_{13} - Z_{23}}_{\text{noise part}}.
\end{align*}

\textit{Simplifying the deterministic part:} Using $S_{x^2} + S_{x-x^2} + S_{1-x} = n$ and $S_y = n - S_{1-y}$:
$$n - S_{1-y} = S_y \implies \hat{S}_y^{(2)} = S_y + Z_{11} + Z_{12} + Z_{13} - Z_{23}.$$

\textit{Variance:}
$$\mathrm{Var}(\hat{S}_y^{(2)}) = 4 \cdot \frac{8}{\varepsilon^2} = \frac{32}{\varepsilon^2}.$$

\textbf{Part 3: Independence.} $\hat{S}_y^{(1)}$ depends on $\{Z_{21}, Z_{22}\}$ and $\hat{S}_y^{(2)}$ depends on $\{Z_{11}, Z_{12}, Z_{13}, Z_{23}\}$. These are disjoint sets of independent noise variables, so the estimators are independent.
\end{proof}

\begin{proof}[Proof of Theorem~\ref{thm:refined_y}]
We apply Lemma~\ref{lem:optimal_weights} with $\sigma_1^2 = 16/\varepsilon^2$ and $\sigma_2^2 = 32/\varepsilon^2$.

Computing optimal weights:
$$w_1' = \frac{32/\varepsilon^2}{16/\varepsilon^2 + 32/\varepsilon^2} = \frac{2}{3}, \qquad w_2' = \frac{1}{3}.$$

Refined estimator:
$$\hat{S}_y = \frac{2}{3}\bigl(\tilde{S}_{xy} + \tilde{S}_{(1-x)y}\bigr) + \frac{1}{3}\bigl(\tilde{n}_x - \tilde{S}_{1-y}\bigr).$$

Computing the variance:
$$\mathrm{Var}(\hat{S}_y) = \frac{(16/\varepsilon^2)(32/\varepsilon^2)}{16/\varepsilon^2 + 32/\varepsilon^2} = \frac{32}{3\varepsilon^2}.$$
\end{proof}

\end{document}